\documentclass[journal]{IEEEtran}
\usepackage{cite}
\usepackage{amsfonts,amssymb,amsthm,amsbsy,amsmath,paralist,bm,ifthen,color}
\usepackage{algorithm,algorithmic}
\usepackage{graphicx}
\usepackage{subfig}
\usepackage[font=footnotesize]{caption}
\usepackage{xcolor}
\usepackage{relsize}
\usepackage[numbers,sort&compress,square]{natbib}
\usepackage{tabularx}
\usepackage{booktabs}
\usepackage{multirow}
\usepackage{setspace,cases}
\newtheorem{lemma}{Lemma}
\newtheorem{theorem}{Theorem}

\newtheorem{remark}{Remark}

%
\ifCLASSINFOpdf

\else

\fi

\hyphenation{op-tical net-works semi-conduc-tor}
\begin{document}
\title{Relay Selection and Resource Allocation for Ultra-Reliable Uplink Transmission in Smart Factory Scenarios}
\author{Jing Cheng,~\IEEEmembership{Student Member,~IEEE}, Chao Shen,~\IEEEmembership{Member,~IEEE}
\thanks{Jing Cheng and Chao Shen are with the State Key Laboratory of Rail Traffic Control and Safety, Beijing Jiaotong University, Beijing 100044, China (e-mail: chengjing@bjtu.edu.cn; chaoshen@bjtu.edu.cn).}
}

%
\maketitle
\begin{abstract}
  In this paper, a relay-aided two-phase transmission protocol for the smart factory scenario is proposed. This protocol aims at enabling all robots' ultra-reliable target number of uplink critical data transmission within a latency constraint by jointly optimizing the relay selection, resource block (RB) assignment, and transmit power allocation. Such protocol design is formulated as a mixed-integer and strictly non-convex problem where optimization variables are mutual coupling, which is definitely challenging. Instead of conventional methods designed for solving the problem, we leverage the properties of the relative entropy function to equivalently transform the problem without introducing extra constraints. As the packet error probability requirements of each robot under two possible transmission modes are coupled in one overall reliability constraint, the big-M technique is applied to decouple it into two corresponding reliability constraints. One is for direct transmission mode, and the other is for cooperative transmission mode. Moreover, both non-convex penalty (NCP) and quadratic penalty (QP) approaches are utilized to deal with the binary indicator constraints. Based on such penalty methods, a sequence of penalized approximated convex problems can be iteratively solved for sub-optimal solutions. Numerical results demonstrate the efficiency of such two penalty methods from the perspectives of sub-optimal values of total transmit power and convergence rate. Further, the impacts of reliability, the number and location of relays, the number of robots, the target number of data bits on the total power consumption are analyzed.
\end{abstract}

\begin{IEEEkeywords}
  URLLC, relay selection, resource block assignment, power allocation, penalty methods
\end{IEEEkeywords}
\IEEEpeerreviewmaketitle
\section{Introduction}
There is a paradigm shift of the fifth-generation ($5$G) network from the conventional human-centric communication to the Internet of things (IoT) machine-to-machine (M2M) communication \cite{HampelICM2019}. In the context of $5$G pillar use cases, IoT applications have been divided into two categories: massive IoT applications and mission-critical IoT applications. Specifically, massive IoT applications aim at improved connectivity, scalability, and battery life of a large number of low-power and low-cost devices \cite{SchulzICM2017}, which has been widely studied in the literature \cite{Chai2018,SenelITC2018,CuiIJoSAiC2021}. On the other hand, mission-critical IoT applications such as factory automation, remote surgery, as well as intelligent transport systems (ITS) feature ultra-reliable and low-latency communication (URLLC) \cite{ShafiIJSAC2017,ParvezICST2018}, which is of great challenge. Looking from the fourth industrial revolution roadmap (known as Industry $4.0$), the communication network in a smart factory is expected to shift from wired connections to wireless connections, thereby imposing stringent requirements of end-to-end delay (e.g. $1$ ms) and reliability (e.g. $1-10^{-9}$) on the wireless transmission \cite{AijazPI2019,HuangIIEM2018}.

It is the conflicting nature of low-latency and ultra-high reliability that makes mission-critical IoT applications challenging. On one hand, ultra-reliability can be achieved by using strong channel codes with low code rates, adding redundancy and parity, and leveraging retransmission techniques like hybrid automatic repeat request with chase combining (HARQ-CC) and HARQ with incremental redundancy (HARQ-IR) \cite{ShirvanimoghaddamICM2019,ZhangITC2020,CohenIToC2021,XuIToC2020}. Inevitably, such operations will result in increased latency. On the other hand, small-size packets can be transmitted to reduce latency, which can, in turn, cause a loss of coding gain \cite{JiIWC2018}. In fact, ultra-high reliability and ultra-low latency cannot be achieved simultaneously in the current long-packet wireless transmission system. Therefore, it is necessary to design an ultra-reliable transmission scheme suitable for wireless transmission scenarios in smart factories under a given specific delay requirement.

Recently, extensive research attention has been paid to ultra-reliable communication. Particularly, retransmission is one of the potential and promising technologies for enhancing reliability. In practice, the retransmission technique is widely used in $4$G Long-Term Evolution (LTE) to improve reliability at the cost of increased latency. Several attempts of applying retransmission techniques to URLLC-oriented scenarios have been made. For example, \cite{Kotaba2019,KotabaIToWC2021} investigated the uplink transmission performance by incorporating non-orthogonal multiple access (NOMA) with HARQ protocol while meeting high reliability requirements specified for all users. The authors in \cite{MakkiIWCL2014} considered the power-limited throughput of a HARQ-IR protocol with short-blocklength codes as a performance metric, where the closed-form expressions of the outage probabilities were derived. From the point of view of exploiting time diversity for enabling URLLC in an energy-efficient manner, the proposed HARQ-IR joint optimization problem over the number of retransmissions, blocklength, and power per round in \cite{AvranasIJSAC2018} was studied and solved by the dynamic programming algorithm. For a burst communication scenario, a delay-constrained fast HARQ protocol subject to the derived error probabilities with closed-form expressions was presented in \cite{MakkiITWC2019}. Admittedly, the retransmission mechanism can improve transmission reliability by exploiting time diversity, but the delay will inevitably increase.

On the other hand, URLLC packets scheduled in mini-slots experience the quasi-static fading channel, and such quasi-static channel characteristics will degrade the performance of retransmission. Moreover, when the channel is in deep fading for a long time, the time diversity may not be able to guarantee the reliability of URLLC packet transmission. Thus, we consider a two-phase relay-aided transmission protocol, which can act as an enabler for ultra-reliable communication by means of spatial diversity, apart from the advantages of improving capacity and quality of service (QoS). Traditionally, it is long packets that are transmitted in the relay-assisted wireless communication system, and the achievable rate is normally characterized by Shannon's capacity \cite{AlamIToWC2013,JuIToC2019,BedeerIJoSAiC2015}. However, due to the demanding latency requirement in mission-critical applications, the size of transmitted URLLC packets is small. Thus, the communication is no longer arbitrarily reliable and the decoding error probability is no longer negligible. In view of this, Shannon's capacity is not applicable to characterize the maximum achievable rate of short URLLC packets. Otherwise, the performance of the latency and reliability will be underestimated \cite{XuIToWC2020,XuIToVT2020}. This necessitates the achievable rate characterization and relaying protocol design under the finite blocklength regime. Initially, a tight normal approximation of small packets' achievable rate was derived in \cite{PolyanskiyITIT2010} for the AWGN channel. For further extensions, achievable rates of Gilbert-Elliott channels \cite{PolyanskiyIToIT2011}, quasi-static fading channels \cite{YangITIT2014,Gursoy2011}, quasi-static fading channels with retransmissions \cite{WuIToC2011,MakkiIWCL2014} and frequency-selective fading channels \cite{SheITWC2018} were subsequently characterized. The above fundamental results for short packets' approximated achievable rates under different channel conditions laid a foundation for the relaying protocol design with finite blocklength codes. For example, the blocklength-limited capacity performance of a simple relaying scenario with a source, a destination, and a relay was studied in \cite{HuIToWC2016,HuIToVT2016}. In view of the advantage of applying relaying to $5$G URLLC transmissions stated in \cite{HuIN2018}, different relay-assisted transmission schemes for URLLC were investigated. For critical data transmission in a vehicle-to-everything (V2X) network, a two-phase (broadcast phase and relaying phase) transmission policy was proposed \cite{GhatakIWCL2021}, where the outage probability was minimized by the cooperative frame design. In \cite{Singh2020}, an energy-efficient resource allocation strategy was proposed for the multi-user multicarrier amplify-and-forward (AF) relay network where $K$ URLLC paired users communicating via an AF relay in the multiple access phase and broadcast phase. Aside from studies of the half-duplex relay, \cite{Hu2019} presented the reliability-optimal design of a full-duplex decode-and-forward (DF) relaying network and showed that the full-duplex DF relaying generally outperforms the full-duplex AF relaying in terms of reliability, especially for mission-critical applications. Since the capacity formula in the finite blocklength regime itself is quite complex, the studies of relay-assisted URLLC transmission generally consider the simple scenario of a single relay with one/multiple users as the above works did.

Different from the existing works, in this paper, we consider the multiple robots' uplink critical data transmission in a smart factory scenario where the transmission reliability is enabled by the aid of multiple relays. Given that robots and relays are battery-limited, the design principle is set as total transmit power minimization under the predefined maximum overall packet error probability and delay constraints to prolong the lifetime. Meanwhile, in view that the target payload in critical IoT applications is generally small, the strategy of assigning multiple resource blocks (RBs) to one user is not efficient and the assigned RBs cannot be fully leveraged. Thus, in terms of RB assignment, the design policy is that only one RB can be occupied by one robot. In order to make each robot complete the critical data upload mission within target reliability and latency constraints in an energy-efficient manner, we propose a relay-aided two-phase transmission protocol that jointly optimizes the relay selection, RB assignment, and transmit power allocation. The main contributions of this paper are summarized as follows.
\begin{itemize}
  \item The problem on how to perform the relay selection, RB assignment and transmit power allocation to guarantees all robots' uplink critical data transmission under a predefined delay constraint in an energy-efficient manner is studied. The formulated problem is strictly non-convex in terms of the binary nature of relay selection and RB assignment indicators, mutual coupling of optimization variables, and neither convex nor concave rate expressions. By the proposed low-complexity and fast-convergence non-convex penalty (NCP) based and quadratic penalty (QP) based successive convex approximation (SCA) iterative algorithms, the stationary-point solutions of the tightly approximated formulated problem can be obtained.
  \item It is common in the literature that solving the coupling problem of optimization variables is to introduce a new auxiliary variable and then add additional constraints on the relationship between the new variable and the original one \cite{SunIToC2017}. Instead, in this paper, we fully excavate the properties of the relative entropy function to transform the original problem, and the transformed problem is equivalent to that dealt with by the traditional method. As compared to the traditional method, the proposed transformation has an advantage in the sense that no extra constraints are introduced.
  \item For each robot, the reliability requirements under two possible transmission modes are coupled in one constraint. In view of this, we exploit the big-M technique to decouple the reliability requirements. Based on the conclusion in \cite{SunITWC2019}, we adopt a near-optimal combination of decoding error probabilities for the two-hop transmission mode.
  \item In terms of the binary indicator constraints, we apply two efficient penalty approaches to resolve it. One is a novel NCP approach, which first equivalently transforms the binary variable constraints into continuous variable constraints plus $\ell_0$-norm constraints. The key of the NCP method is to represent $\ell_0$-norm constraints as the equivalent form of the difference of $\ell_1$-norm and $\ell_2$-norm and add this non-convex norm-based equality constraint as a penalty term into the objective function, thereby generating a penalized problem. The other one is a quadratic penalty method, which uses continuous indicator constraints and quadratic constraints to equivalently replace the original binary constraints and then adds this concave quadratic function as a penalty term into the objective function.
  \item Numerical results verify the efficiency of the applied NCP-based and QP-based SCA iterative algorithms for solving the formulated problem. In addition, the impacts of reliability, the number and location of relays, the number of robots, the target number of data bits on the total power consumption are revealed.
\end{itemize}
\section{System Model}
Consider an uplink relay-aided URLLC transmission in a smart factory with one controller, $N$ fixed DF relays operating in a half-duplex mode indexed by $n\in\mathcal{N}\triangleq\{1,\cdots,N\}$ and $K$ randomly distributed robots indexed by $k\in\mathcal{K}\triangleq\{1,\cdots,K\}$, as shown in Fig. \ref{systemModel}. All devices including the controller, relays, and robots are equipped with a single antenna. The available system spectrum consists of $M$ orthogonal RBs indexed by $m\in\mathcal{M}\triangleq\{1,\cdots,M\}$, each with bandwidth $W$Hz.
\begin{figure}[htbp]
	\centering
	\includegraphics[width=.9\linewidth]{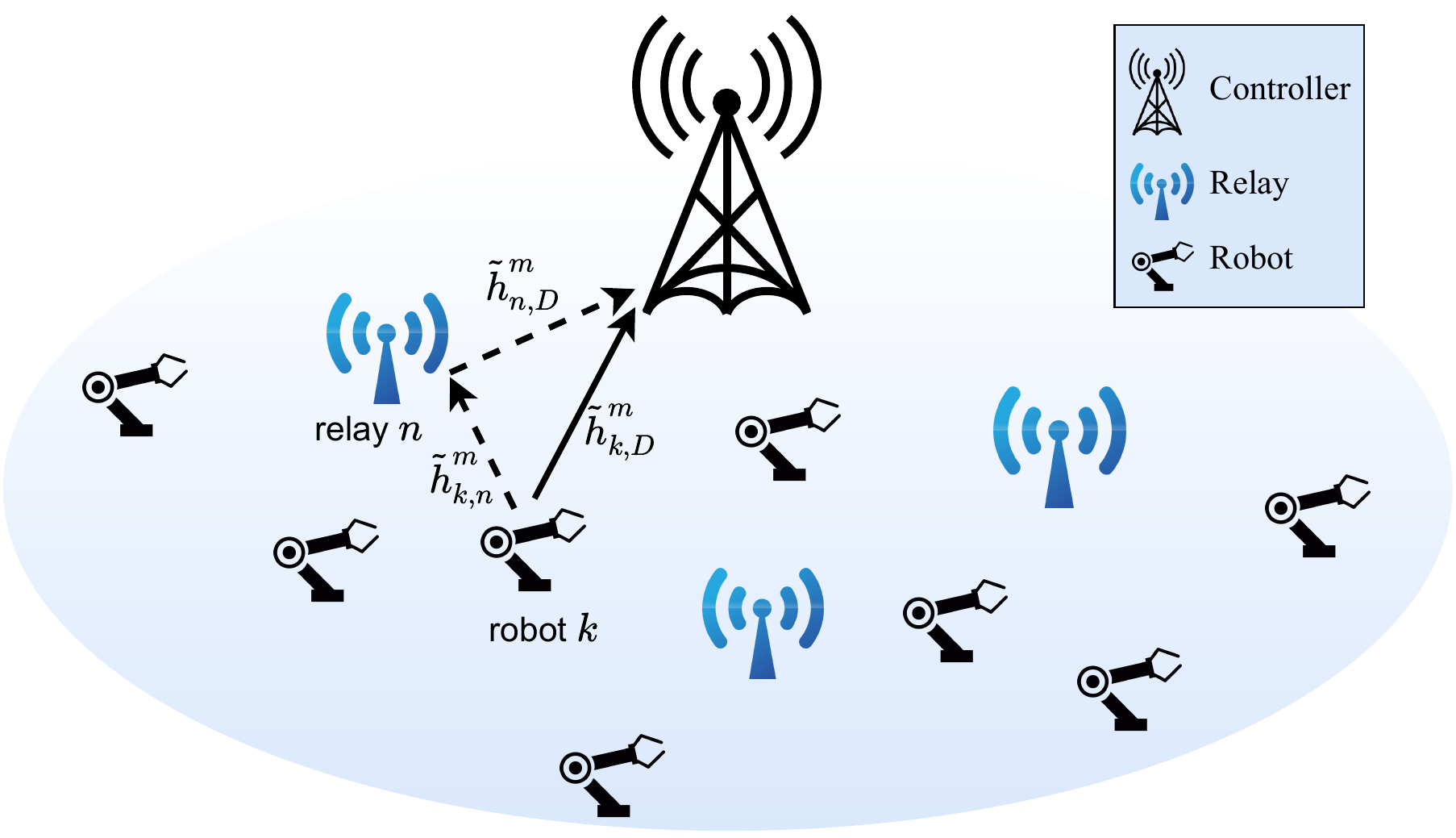}
	\caption{A relay-aided uplink URLLC system with one controller, multiple robots and relays.}
	\label{systemModel}
\end{figure}

Each robot $k$ has its own critical QoS requirements on the packet transmission, mainly including the minimum number of $B_k$ data bits and the allowable maximum packet error probability (PEP) $\varepsilon_{\max}$. If a robot experiences deep fading in a certain period of time, the critical QoS requirements cannot be guaranteed. Therefore, a relay-aided two-phase transmission protocol is proposed to secure each robot's ultra-reliable communication by means of spatial diversity. In this protocol, each robot has two types of transmission modes, i.e., each robot can deliver a packet to the controller directly under the well-conditioned channel or by the cooperative communication aided by one selected relay. In order to clearly indicate the relay selection and RB assignment, we denote the indicator by $\phi_{k,n}^m,n\in\mathcal{N}_0\triangleq \mathcal{N}\cup \{0\}$, where $\phi_{k,0}^m=1$ refers to the direct transmission from the robot to the controller, $\phi_{k,n}^m=1,\forall n\in\mathcal{N}$ means the $k$-th robot's packet is forwarded by the $n$-th relay on the $m$-th RB; otherwise, $\phi_{k,n}^m=0$. It is assumed that one RB is allocated to at most one robot whether its packet is directly delivered or forwarded through relay-assisted communication, such that
\begin{equation}
  \sum_{k=1}^K \sum_{n=0}^N \phi_{k,n}^m\leq 1,\forall m.
\end{equation}
Thus, there is no interference among each robot's packet transmission. Due to the short-packet transmission characteristic of URLLC systems and low data demands of mission-critical IoT applications, the allocation of multiple RBs to one robot is not efficient and such RBs cannot be fully utilized. In light of this, we assume each robot can only occupy one RB, namely,
\begin{equation}
  \sum_{m=1}^M \sum_{n=0}^N \phi_{k,n}^m=1,\forall k.
\end{equation}
\subsection{Achievable Rate Characterization of URLLC Packets}
Via the tight normal approximation of capacity under the AWGN channels, the maximum number of received data bits for each robot under the proposed relay-aided two-phase transmission protocol can be characterized.
\subsubsection{Phase I}
If the robot $k$ delivers a packet directly to the controller on the RB $m$, the maximum number of received data bits $R_{k,D}^m$ of the controller can be given by
\begin{equation}
  \!\!R_{k,D}^m\!=\!\tau_1 W\!\!\left(\log_2(1\!+\!h_{k,D}^mp_{k,D}^m)\!-\!\sqrt{\frac{V_{k,D}^m}{\tau_1 W}} \frac{Q^{-1}(\varepsilon_{k,D}^m)}{\ln 2}\!\!\right)\!,
\end{equation}
where $\tau_1$ is the duration of the first transmission phase, $h_{k,D}^m=\frac{|\tilde{h}_{k,D}^m|^2}{\sigma_{k,D}^2}$, $\tilde{h}_{k,D}^m\in\mathbb{C}$ is the channel from the $k$-th robot to the controller (destination node) on the $m$-th RB, $p_{k,D}^m$ is the transmit power, $V_{k,D}^m=1-\frac{1}{\left(1+h_{k,D}^mp_{k,D}^m\right)^2}$ is the channel dispersion, $\varepsilon_{k,D}^m$ is the decoding error probability, $Q^{-1}(\varepsilon_{k,D}^m)$ is the inverse of $Q(\varepsilon_{k,D}^m)=\int_{\varepsilon_{k,D}^m}^{\infty} \frac{1}{\sqrt{2 \pi}} e^{-t^{2}/2}dt$. Note that in this direct transmission mode, the robot $k$ is not allowed to use the subsequent idle second phase. In this case, the decoding error probability is the packet error probability, which has to satisfy
\begin{equation}
  \varepsilon_{k,D}^m\leq \varepsilon_{\max}.
\end{equation}

If the relay $n$ is used for the $k$-th robot's packet transmission on the $m$-th RB, the maximum number of received data bits $R_{k,n}^m$ of the $n$-th relay can be expressed as
\begin{equation}
  \!R_{k,n}^m\!=\!\tau_1 W\!\!\left(\log_2(1\!+\!h_{k,n}^mp_{k,n}^m)\!-\!\sqrt{\frac{V_{k,n}^m}{\tau_1 W}} \frac{Q^{-1}(\varepsilon_{k,n}^m)}{\ln 2}\!\right)\!,
\end{equation}
where $h_{k,n}^m=\frac{|\tilde{h}_{k,n}^m|^2}{\sigma_{k,n}^2}$, $\tilde{h}_{k,n}^m\in\mathbb{C}$ is the channel from the $k$-th robot to the $n$-th relay on the RB $m$, $p_{k,n}^m$ is the transmit power,   $V_{k,n}^m=1-\frac{1}{\left(1+h_{k,n}^mp_{k,n}^m\right)^2}$ is the channel dispersion, $\varepsilon_{k,n}^m$ is the decoding error probability.
\subsubsection{Phase II}
In the second phase, relay $n$ decodes the packet sent from robot $k$, regenerates a packet, and forwards it to the controller on the same RB using the regenerate-and-forward cooperative protocol. The maximum number of received data bits $R_{n,D}^{m,k}$ of the controller can be represented as
\begin{equation}
  \!\!R_{n,D}^{m,k}\!\!=\!\!\tau_2 W\!\!\left(\!\!\log_2(1\!+\!h_{n,D}^mp_{n,D}^{m,k})\!-\!\sqrt{\frac{V_{n,D}^{m,k}}{\tau_2 W}} \frac{Q^{-1}(\varepsilon_{n,D}^{m,k})}{\ln 2}\!\!\right)\!,
\end{equation}
where $\tau_2$ is the duration of the second transmission phase, $h_{n,D}^m=\frac{|\tilde{h}_{n,D}^m|^2}{\sigma_{n,D}^2}$, $\tilde{h}_{n,D}^m\in\mathbb{C}$ is the channel from the $n$-th relay to the controller on the $m$-th RB, $p_{n,D}^{m,k}$ is the transmit power,   $V_{n,D}^{m,k}=1-\frac{1}{\left(1+h_{n,D}^mp_{n,D}^{m,k}\right)^2}$ is the channel dispersion, $\varepsilon_{n,D}^{m,k}$ is the decoding error probability.

Note that if the $k$-th robot's packet transmission is aided by the $n$-th relay on the $m$-th RB, the overall packet error probability requirement is given by
\begin{align}
  1-(1-\varepsilon_{k,n}^m)(1-\varepsilon_{n,D}^{m,k})&=\varepsilon_{k,n}^m+\varepsilon_{n,D}^{m,k}
  -\varepsilon_{k,n}^m \varepsilon_{n,D}^{m,k}\notag\\
  &\overset{(a)}{\approx} \varepsilon_{k,n}^m+\varepsilon_{n,D}^{m,k}\leq \varepsilon_{\max},
\end{align}
where approximation $(a)$ is accurate enough since each decoding error probability is extremely small.
\subsection{Problem Formulation}
Considering that each robot has to complete the uplink transmission task of $B_k$ data bits, the overall throughput requirement for each robot $k$ under the relay-aided two-phase transmission protocol can be characterized by
\begin{equation}\label{capacity}
  \sum_{m=1}^M \sum_{n=1}^N \phi_{k,n}^m R_{n,D}^{m,k}+\sum_{m=1}^M \phi_{k,0}^m R_{k,D}^m\geq B_k,\forall k.
\end{equation}
It is necessary to remark that \eqref{capacity} holds with equality for the optimal solution. That is, robot $k$ either forwards $B_k$ data bits to the controller through one relay or directly transmits $B_k$ data bits to the controller.

Note that if relay $n$ is used for the $k$-th robot's packet transmission, the throughput of the link from the robot $k$ to the $n$-th relay has to be equal to or larger than that of the link from the relay $n$ to the controller. From \eqref{capacity}, the throughput of the link from the relay $n$ to the controller is $B_k$ data bits. Hence, we have the following constraint
\begin{equation}
  \sum_{m=1}^M \sum_{n=1}^N \phi_{k,n}^m R_{k,n}^m \geq \sum_{m=1}^M \sum_{n=1}^N \phi_{k,n}^m B_k,\forall k.
\end{equation}

Meanwhile, for each robot $k$, such target uplink transmission task has to be completed within the allowable maximum packet error probability $\varepsilon_{\max}$. Therefore, to ensure the overall reliability requirement, under the relay-aided two-phase transmission protocol, each decoding error probability component has to satisfy
\begin{equation}
  \sum_{m=1}^M \sum_{n=1}^N \phi_{k,n}^m(\varepsilon_{k,n}^m+\varepsilon_{n,D}^{m,k})+\sum_{m=1}^M \phi_{k,0}^m \varepsilon_{k,D}^m\leq \varepsilon_{\max},\forall k.
\end{equation}

In such a relay-aided two-phase transmission protocol, our aim is to design an efficient relay selection, RB assignment, and power control strategy under the finite blocklength regime to ensure all robots' ultra-reliable target payload transmission and minimize the total transmit power of all robots and relays. To be more explicit, the optimization problem can be formulated as
\begin{subequations}
\begin{align}
\!\!\!\mathrm{P1}:\min_{\Phi,\mathcal{P,E}} \quad& P_{\mathrm{tot}}\\
\mathrm{s.t.}\quad~ &\!\!\sum_{m=1}^M\! \sum_{n=1}^N\! \phi_{k,n}^m R_{n,D}^{m,k}\!+\!\!\!\sum_{m=1}^M\! \phi_{k,0}^m R_{k,D}^m\!\!\geq\!\!B_k,\!\forall k,\label{P1-C1}\\
&R_{k,n}^m \geq R_{n,D}^{m,k},\forall k,n,m,\label{P1-C2}\\
&\!\!\sum_{m=1}^M\! \sum_{n=1}^N\! \phi_{k,n}^m\!(\varepsilon_{k,n}^m\!+\!\varepsilon_{n,D}^{m,k})\!+\!\!\!\sum_{m=1}^M\! \phi_{k,0}^m \varepsilon_{k,D}^m\!\!\leq\!\!\varepsilon_{\max},\!\forall k,\label{P1-C3}\\
&\phi_{k,n}^m \in \{0,1\},\forall k,m,n\in\mathcal{N}_0,\label{P1-C5}\\
&\sum_{k=1}^K \sum_{n=0}^N \phi_{k,n}^m\leq 1,\forall m,\label{P1-C6}\\
&\sum_{m=1}^M \sum_{n=0}^N \phi_{k,n}^m=1,\forall k,\label{P1-C7}\\
&p_{k,D}^m,p_{k,n}^m,p_{n,D}^{m,k}\geq 0,\forall k,m,n,\label{P1-C8}
\end{align}
\end{subequations}
where
\begin{equation}
  P_{\mathrm{tot}}\!\!\triangleq\!\!\sum_{k=1}^K\!\sum_{m=1}^M\! \sum_{n=1}^N\! \phi_{k,n}^m\! \left(\!p_{k,n}^m\!+\!p_{n,D}^{m,k}\right)\!+\!\sum_{k=1}^K\!\sum_{m=1}^M\! \phi_{k,0}^m p_{k,D}^m,
\end{equation}
$\Phi=\{\phi_{k,n}^m,\forall k\in\mathcal{K},m\in\mathcal{M},n\in\mathcal{N}_0\}$ is the set of the relay selection and RB assignment indicators, $\mathcal{P}=\{p_{k,n}^m,p_{n,D}^{m,k},p_{k,D}^m,\forall k\in\mathcal{K},m\in\mathcal{M},n\in\mathcal{N}\}$ is the set of transmit powers, $\mathcal{E}=\{\varepsilon_{k,n}^m,\varepsilon_{n,D}^{m,k},\varepsilon_{k,D}^m,\forall k\in\mathcal{K},m\in\mathcal{M},n\in\mathcal{N}\}$ is the set of decoding error probabilities.

Constraint \eqref{P1-C1} means that for any robot $k$, it has to transmit $B_k$ data bits whether the $k$-th robot's packet is directly delivered to the controller or is forwarded to the controller by one relay. Constraint \eqref{P1-C2} requires that if one relay is used, the capacity of the robot-relay link should be larger enough to transmit $B_k$ data bits within the transmission duration. The overall packet error probability requirement is reflected in the constraint \eqref{P1-C3}. Constraints \eqref{P1-C5}-\eqref{P1-C7} refer to the relay selection and RB assignment strategy and indicate that one RB is allocated to at most one robot and one relay, and one robot can only occupy one RB and one relay (if necessary). The constraint \eqref{P1-C8} is the nonnegative transmit power constraint.

The problem $\mathrm{P1}$ is a mixed-integer strictly non-convex problem. One challenging aspect is the well-known discrete nature of the binary relay selection and RB assignment indicator. Secondly, three sets of optimization variables are coupled together, which makes the problem even harder to solve. Thirdly, the expressions $R_{k,n}^m,R_{n,D}^{m,k},R_{k,D}^m$ are highly intractable and are neither convex nor concave w.r.t.  $p_{k,n}^m,p_{n,D}^{m,k}$ and $p_{k,D}^m$, respectively. In general, this problem is NP-hard and its globally optimal solution cannot be efficiently obtained by the exhaustive search method with high computation cost. Therefore, a low-complexity algorithm is needed to obtain a sub-optimal solution to the problem $\mathrm{P1}$, thereby efficiently scheduling the URLLC packets' ultra-reliable transmission.
\section{Energy-efficient Relay Selection and Resource Allocation Optimization}
\subsection{Problem Reformulation}
Using the approximations that $V_{k,n}^m\approx 1,V_{n,D}^{m,k}\approx 1,V_{k,D}^m\approx 1$, which are tight at moderate or large received SNR, i.e., higher than $3$ dB, such SNR condition is almost true for URLLC packet scheduling. Meanwhile, approximating the channel dispersion to one has been well-acknowledged in most of the current research works \cite{SheITWC2018,RenITC2020}. In essence, a lower bound of $R_{k,n}^m,R_{n,D}^{m,k},R_{k,D}^m$ leads to more stringent requirements. Thus, $R_{k,n}^m,R_{n,D}^{m,k},R_{k,D}^m$ can be approximated by
\begin{align}
  &\bar{R}_{k,n}^m=\tau_1W\left(\log_2(1+h_{k,n}^mp_{k,n}^m)
  -\frac{Q^{-1}(\varepsilon_{k,n}^m)}{\sqrt{\tau_1W}\ln 2}\right),\\
  &\bar{R}_{n,D}^{m,k}=\tau_2W\left(\log_2(1+h_{n,D}^mp_{n,D}^{m,k})
  -\frac{Q^{-1}(\varepsilon_{n,D}^{m,k})}{\sqrt{\tau_2W}\ln 2}\right),
\end{align}
and
\begin{equation}
  \bar{R}_{k,D}^m=\tau_1W\left(\log_2(1+h_{k,D}^mp_{k,D}^m)-\frac{Q^{-1}(\varepsilon_{k,D}^m)}{\sqrt{\tau_1W}\ln 2}\right),\\
\end{equation}
respectively.

Therefore, the problem $\mathrm{P1}$ is reformulated as
\begin{subequations}
\begin{align}
\mathrm{P2}:\min_{\Phi,\mathcal{P,E}} \quad& P_{\mathrm{tot}}\\
\mathrm{s.t.}\quad~&\!\!\sum_{m=1}^M \!\sum_{n=1}^N\! \phi_{k,n}^m \bar{R}_{n,D}^{m,k}\!+\!\!\!\sum_{m=1}^M \! \phi_{k,0}^m \bar{R}_{k,D}^m\!\!\geq\!\! B_k,\!\forall k,\label{P2-C1}\\
&\sum_{m=1}^M\! \sum_{n=1}^N\! \phi_{k,n}^m \bar{R}_{k,n}^m\! \geq \!\!\sum_{m=1}^M \!\sum_{n=1}^N \! \phi_{k,n}^m B_k,\forall k,\label{P2-C2}\\
&\eqref{P1-C3}\!-\!\eqref{P1-C8}.
\end{align}
\end{subequations}

The approximated problem $\mathrm{P2}$ is still challenging to solve due to the coupled nature of optimization variables, i.e., $\Phi,\mathcal{P}$ and $\Phi,\mathcal{E}$ are coupled in constraints \eqref{P2-C1},\eqref{P2-C2},\eqref{P1-C3} and the binary nature of the relay selection and RB assignment indicator reflected in constraint \eqref{P1-C5}.

First, let us deal with the problem of $\phi_{k,n}^m$ coupled with $p_{k,n}^m,p_{n,D}^{m,k},p_{k,D}^m$, which exists in the objective function and constraints \eqref{P2-C1}-\eqref{P2-C2}. Towards this end, we introduce new slack variables as follows
\begin{equation}\label{slack}
  \tilde{p}_{k,n}^m=\phi_{k,n}^m p_{k,n}^m,~\tilde{p}_{n,D}^{m,k}=\phi_{k,n}^mp_{n,D}^{m,k},~\tilde{p}_{k,D}^m=\phi_{k,0}^mp_{k,D}^m,
\end{equation}
for all $k,m,n$, and the objective function can be accordingly written as
\begin{equation}
  \widetilde{P}_{\mathrm{tot}}=\sum_{k=1}^K\sum_{m=1}^M \sum_{n=1}^N \left(\tilde{p}_{k,n}^m+\tilde{p}_{n,D}^{m,k}\right)+\sum_{k=1}^K\sum_{m=1}^M \tilde{p}_{k,D}^m.
\end{equation}
By leveraging the properties of the relative entropy function, the terms $\phi_{k,n}^m R_{n,D}^{m,k},~\phi_{k,0}^m R_{k,D}^m$ in the constraint \eqref{P1-C1} can be correspondingly represented as
\begin{equation}\label{R_nDmk}
  \widetilde{R}_{n,D}^{m,k}\!=\!
  \tau_2W\!\!\left(\!\!\phi_{k,n}^m\log_2\!\left(\!1\!+
  \!\frac{h_{n,D}^m\tilde{p}_{n,D}^{m,k}}{\phi_{k,n}^m}\!\right)
  \!-\!\frac{\phi_{k,n}^mQ^{-1}(\varepsilon_{n,D}^{m,k})}{\sqrt{\tau_2W}\ln 2}\!\right)\!,
\end{equation}
\begin{equation}\label{R_kDm}
  \widetilde{R}_{k,D}^m\!=\!\tau_1W\!\!\left(\!\!\phi_{k,0}^m\log_2\!\left(\!1\!+
  \!\frac{h_{k,D}^m\tilde{p}_{k,D}^m}{\phi_{k,0}^m}\!\right)
  \!-\!\frac{\phi_{k,0}^mQ^{-1}(\varepsilon_{k,D}^m)}{\sqrt{\tau_1W}\ln 2}\!\right).
\end{equation}
Also, the term $\phi_{k,n}^m R_{k,n}^m$ in the constraint \eqref{P1-C2} can be alternatively expressed as
\begin{equation}\label{R_knm}
  \widetilde{R}_{k,n}^m\!=\!\tau_1W\!\!\left(\!\phi_{k,n}^m\log_2\!\left(\!1\!+
  \!\frac{h_{k,n}^m\tilde{p}_{k,n}^m}{\phi_{k,n}^m}\!\right)
  \!-\!\frac{\phi_{k,n}^mQ^{-1}(\varepsilon_{k,n}^m)}{\sqrt{\tau_1W}\ln 2}\!\right).
\end{equation}
\begin{remark}\label{remark1}
Note that the introduction of slack variables $\tilde{p}_{k,n}^m,\tilde{p}_{n,D}^{m,k},\tilde{p}_{k,D}^m$ in \eqref{slack} means that if $\phi_{k,n}^m=0$, then $\tilde{p}_{k,n}^m=0,\tilde{p}_{n,D}^{m,k}=0$, and if $\phi_{k,0}^m=0$, then $\tilde{p}_{k,D}^m=0$. Taking $\phi_{k,n}^m R_{k,n}^m$ as an example for explanation. The common method is to reformulate $\phi_{k,n}^m R_{k,n}^m$ as
\begin{equation}
  \!\phi_{k,n}^m R_{k,n}^m\!=\!\tau_1W\!\!\left(\!\log_2\!(\!1\!+\!h_{k,n}^m\tilde{p}_{k,n}^m)
  \!-\!\frac{\phi_{k,n}^mQ^{-1}(\varepsilon_{k,n}^m)}{\sqrt{\tau_1W}\ln 2}\!\right),
\end{equation}
using a constraint to limit the relationship between the variable $\phi_{k,n}^m$ and $\tilde{p}_{k,n}^m$, i.e., satisfying $\tilde{p}_{k,n}^m=0$ when $\phi_{k,n}^m=0$. Here, we exploit the properties of the relative entropy function $\mathrm{rel\_entr}(x,y)=x\log\left(\frac{x}{y}\right)$ that $\mathrm{rel\_entr}(x,y)$ is a convex function w.r.t. $x,y$ and it equals to $0$ when $x=0,y\geq0$, thus generating the reformulations \eqref{R_nDmk}-\eqref{R_knm}. Meanwhile, the objective function is the total power minimization, we have $\tilde{p}_{k,n}^m=0,\tilde{p}_{n,D}^{m,k}=0$ when $\phi_{k,n}^m=0$ and $\tilde{p}_{k,D}^m=0$ when $\phi_{k,0}^m=0$. By virtue of such two factors, the transformation reflected in \eqref{R_nDmk}-\eqref{R_knm} are equivalent to the conventional method, and the advantage of the proposed transformations is to avoid adding extra constraints.
\end{remark}

Based on the above transformations, the problem $\mathrm{P2}$ can be reformulated as
\begin{subequations}
\begin{align}
\mathrm{P3}:\min_{\Phi,\mathcal{\widetilde{P},E}} \quad& \widetilde{P}_{\mathrm{tot}}\\
\mathrm{s.t.}\quad~ &\sum_{m=1}^M \sum_{n=1}^N \widetilde{R}_{n,D}^{m,k}+\sum_{m=1}^M \widetilde{R}_{k,D}^m\geq B_k,\forall k,\label{P3-C1}\\
&\sum_{m=1}^M \sum_{n=1}^N \widetilde{R}_{k,n}^m \geq \sum_{m=1}^M \sum_{n=1}^N \phi_{k,n}^m B_k,\forall k,\label{P3-C2}\\
&\tilde{p}_{k,D}^m,\tilde{p}_{k,n}^m,\tilde{p}_{n,D}^{m,k}\geq 0,\forall k,m,n,\label{P3-C5}\\
&\eqref{P1-C3}\!-\!\eqref{P1-C7},
\end{align}
\end{subequations}
where $\mathcal{\widetilde{P}}=\{\tilde{p}_{k,n}^m,\tilde{p}_{n,D}^{m,k},\tilde{p}_{k,D}^m,\forall k\in\mathcal{K},m\in\mathcal{M},n\in\mathcal{N}\}$ is the set of transmit powers.

\begin{lemma}
\label{lemma1}
  The optimal solutions to problem $\mathrm{P3}$ are optimal to problem $\mathrm{P2}$.
\end{lemma}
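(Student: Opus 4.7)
The plan is to show a value-preserving correspondence between the feasible sets of $\mathrm{P2}$ and $\mathrm{P3}$ in both directions, hinging on the perspective identity
\[
\phi_{k,n}^m \log_2\!\Bigl(1 + \frac{h\,\tilde p}{\phi_{k,n}^m}\Bigr) = \begin{cases} \log_2(1 + h p), & \phi_{k,n}^m = 1,\ \tilde p = p,\\ 0, & \phi_{k,n}^m = 0,\end{cases}
\]
where the $\phi_{k,n}^m = 0$ branch is fixed by the convention $\lim_{\phi \to 0^+}\phi \log_2(1 + h\tilde p/\phi) = 0$ for every $\tilde p \geq 0$, which is precisely the relative-entropy property invoked in Remark \ref{remark1}. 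Under the substitution \eqref{slack}, this identity gives $\widetilde R_{n,D}^{m,k} = \phi_{k,n}^m \bar R_{n,D}^{m,k}$, $\widetilde R_{k,n}^m = \phi_{k,n}^m \bar R_{k,n}^m$, and $\widetilde R_{k,D}^m = \phi_{k,0}^m \bar R_{k,D}^m$, so the rate constraints of $\mathrm{P2}$ transcribe termwise into those of $\mathrm{P3}$.

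The easy direction is lifting: I would take any $\mathrm{P2}$-feasible $(\Phi,\mathcal P,\mathcal E)$ and define $\widetilde{\mathcal P}$ via \eqref{slack}. The identities above turn \eqref{P2-C1}--\eqref{P2-C2} into \eqref{P3-C1}--\eqref{P3-C2}; constraints \eqref{P1-C3}--\eqref{P1-C7} are shared verbatim; and $\widetilde P_{\mathrm{tot}} = P_{\mathrm{tot}}$ since $\tilde p_{k,n}^m + \tilde p_{n,D}^{m,k} = \phi_{k,n}^m(p_{k,n}^m + p_{n,D}^{m,k})$ and $\tilde p_{k,D}^m = \phi_{k,0}^m p_{k,D}^m$. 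Hence the $\mathrm{P3}$-optimal value is no larger than the $\mathrm{P2}$-optimal value.

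The harder direction is descending. Take an optimal $(\Phi^*,\widetilde{\mathcal P}^*,\mathcal E^*)$ of $\mathrm{P3}$; the crux is showing that $\phi_{k,n}^{m*} = 0$ forces $\tilde p_{k,n}^{m*} = \tilde p_{n,D}^{m,k*} = 0$, and $\phi_{k,0}^{m*} = 0$ forces $\tilde p_{k,D}^{m*} = 0$. This relies on two facts: these slacks enter the objective with positive coefficient, and the $\widetilde R$-terms containing them are identically zero whenever the associated $\phi$ vanishes (by the convention above), so replacing any nonzero such slack by $0$ strictly lowers the objective without affecting any constraint, contradicting optimality. Once these zeros are in place, setting $p_{k,n}^m = \tilde p_{k,n}^{m*}$ when $\phi_{k,n}^{m*} = 1$ and $p_{k,n}^m = 0$ otherwise (and analogously for $p_{n,D}^{m,k}$ and $p_{k,D}^m$) restores \eqref{slack} and avoids the $0/0$ ambiguity, so the same perspective identity converts $\mathrm{P3}$-feasibility into $\mathrm{P2}$-feasibility of $(\Phi^*,\mathcal P,\mathcal E^*)$ with matching objective. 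The reverse inequality between optimal values follows, and any optimizer of $\mathrm{P3}$ yields an optimizer of $\mathrm{P2}$ with the same power profile, as claimed. The main obstacle is precisely this zero-forcing step, since the transformation \eqref{slack} introduces no explicit coupling constraint between $\phi_{k,n}^m$ and $\tilde p_{k,n}^m$; the argument must extract $\tilde p = 0$ at $\phi = 0$ solely from minimality plus the constraint-independence of $\widetilde R$ on $\tilde p$ in that regime.
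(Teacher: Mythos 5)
Your proposal is correct and follows essentially the same route as the paper: a value-preserving correspondence between the two problems built on the perspective/relative-entropy identity, with the slack powers forced to zero at $\phi_{k,n}^m=0$ by minimality of the objective. The only difference is one of rigor — you spell out the zero-forcing step and both inequalities between the optimal values explicitly, whereas the paper's proof asserts that $\tilde{p}_{k,n}^m,\tilde{p}_{n,D}^{m,k},\tilde{p}_{k,D}^m$ ``can only take zero'' when the corresponding indicator vanishes and defers the justification to Remark~\ref{remark1}.
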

\begin{proof}
The feasible set of the problem $\mathrm{P2}$ contains that of the problem $\mathrm{P3}$. Specifically, when $\phi_{k,n}^m=0,n\in\mathcal{N}_0$, the corresponding $p_{k,n}^m,p_{n,D}^{m,k},p_{k,D}^m$ in the problem $\mathrm{P2}$ can take any value which can be intuitively seen from the objective function, while $\tilde{p}_{k,n}^m,\tilde{p}_{n,D}^{m,k},\tilde{p}_{k,D}^m$ in the problem $\mathrm{P3}$ can only take zero. When $\phi_{k,n}^m=1$, the problem $\mathrm{P2}$ is equivalent to the problem $\mathrm{P3}$. In fact, for the optimal solutions to the problem $\mathrm{P2}$, we can construct the optimal solutions to the problem $\mathrm{P3}$ by setting $p_{k,n}^m=0,p_{n,D}^{m,k}=0,p_{k,D}^m=0$ if $\phi_{k,n}^m=0,n\in\mathcal{N}_0$. Hence, the optimal solutions to the problem $\mathrm{P3}$ is optimal to the problem $\mathrm{P2}$.
\end{proof}

Based on Lemma \ref{lemma1}, the next focus is to solve the problem $\mathrm{P3}$. First, let us assress the coupling problem of optimization variables $\phi_{k,n}^m,n\in\mathcal{N}_0$ and $\varepsilon_{k,n}^m,\varepsilon_{n,D}^{m,k},\varepsilon_{k,D}^m$ in the constraint \eqref{P1-C3}. For the constraint \eqref{P1-C3}, we can exploit the big-M technique to deal with it. More explicitly, the constraint \eqref{P1-C3} can be equivalently decoupled into the following two convex constraints as
\begin{align}
  &\varepsilon_{k,n}^m+\varepsilon_{n,D}^{m,k}\leq \varepsilon_{\max}+1-\phi_{k,n}^m,\forall k,m,n,\label{PER-relay}\\
  &\varepsilon_{k,D}^m\leq \varepsilon_{\max}+1-\phi_{k,0}^m,\forall k,m.\label{PER-directTransmission}
\end{align}

Note that the constraint \eqref{PER-relay} always holds when no relay is activated ($\phi_{k,n}^m=0$), thereby only posing a reliability limit to the two-hop link transmission when one relay is activated, i.e., $\varepsilon_{k,n}^m+\varepsilon_{n,D}^{m,k}\leq \varepsilon_{\max}$ for $\mu_{k,n}^m=1$. Furthermore, the maximum number of received data bits $\widetilde{R}_{k,n}^m,\widetilde{R}_{n,D}^{m,k}$ are strictly increasing w.r.t. $\varepsilon_{k,n}^m,\varepsilon_{n,D}^{m,k}$, respectively, thereby reducing power consumption. So the inequality holds with the equality for the optimal solutions $\varepsilon_{k,n}^m,\varepsilon_{n,D}^{m,k}$, i.e., $\varepsilon_{k,n}^m+\varepsilon_{n,D}^{m,k}=\varepsilon_{\max}$. As discussed in \cite{SunITWC2019}, compared with finding the optimal combination of such two decoding error probabilities, setting these probability terms equal only causes minor performance loss in transmit power. More importantly, setting them equal will considerably simplify the optimization of the relay selection, RB assignment and power control. Hence, we adopt a near optimal combination of decoding error probabilities that guarantees the overall packet error probability, i.e.,
\begin{equation}\label{two-topReliability}
  \varepsilon_{k,n}^m=\varepsilon_{n,D}^{m,k}=\varepsilon_{\max}/2,\forall k,m,n.
\end{equation}

Likewise, the constraint \eqref{PER-directTransmission} always holds for $\mu_{k,n}^m=0$ and when the direct transmission happens, there exists an upper bound for the allowable decoding error probability, i.e., $\varepsilon_{k,D}^m\leq\varepsilon_{\max}$ for $\mu_{k,n}^m=1$. In view that $\widetilde{R}_{k,D}^m$ is strictly increasing in $\varepsilon_{k,D}^m$, larger $\varepsilon_{k,D}^m$ means larger achievable rate and lower power consumption. Thus, the equality holds true for the optimal solution $\varepsilon_{k,D}^m$, i.e.,
\begin{equation}\label{directReliability}
  \varepsilon_{k,D}^m=\varepsilon_{\max},\forall k,m.
\end{equation}

The above near optimal solutions $\varepsilon_{k,n}^m,\varepsilon_{n,D}^{m,k},\forall k,m,n$ and optimal solutions $\varepsilon_{k,D}^m,\forall k,m$ significantly simplify the analysis and processing of the constraints \eqref{P3-C1} and \eqref{P3-C2}. Owing to them, we can readily analyze the convexity of constraints \eqref{P3-C1} and \eqref{P3-C2}. Since expressions $\widetilde{R}_{k,n}^m,\widetilde{R}_{n,D}^{m,k},\widetilde{R}_{k,D}^m$ in constraints \eqref{P3-C1} and \eqref{P3-C2} have the similar form, here, we take $\widetilde{R}_{k,n}^m$ as an example to interpret the concavity of $\widetilde{R}_{k,n}^m$ in detail. Note that for the expression $\widetilde{R}_{k,n}^m$, the first term $\phi_{k,n}^m\log_2\left(1+\frac{h_{k,n}^m\tilde{p}_{k,n}^m}{\phi_{k,n}^m}\right)$ inside it is a concave function where the concavity can be obtained based on the properties of the
relative entropy function. To be explicit, the first term can be equivalently represented as $-\phi_{k,n}^m\log_2\left(\frac{\phi_{k,n}^m}{\phi_{k,n}^m+h_{k,n}^m\tilde{p}_{k,n}^m}\right)$, which is a negative relative entropy function. Hence, the first term is concave function. Meanwhile, the second term inside it is a linear function since the near-optimal solution of the optimization variable $\varepsilon_{k,n}^m$ has been solved. Accordingly, it is easily obtained that $\widetilde{R}_{k,n}^m$ is a concave function w.r.t.  $\mu_{k,n}^m,\tilde{p}_{k,n}^m$. In a similar fashion, it can be conclude that $\widetilde{R}_{n,D}^{m,k}$ and $\widetilde{R}_{k,D}^m$ are concave functions. Thus, constraints \eqref{P3-C1} and \eqref{P3-C2} are convex constraints.

The remaining challenge in the problem $\mathrm{P3}$ is the binary relay selection and RB assignment variables $\phi_{k,n}^m \in \{0,1\},\forall k,m,n\in\mathcal{N}_0$ in the constraint \eqref{P1-C5} since the objective function and other constraints are transformed into convex ones by the above transformations. For the binary indicators, we first relax them to continuous variables, i.e., $\phi_{k,n}^m\in[0,1],\forall k,m,n\in\mathcal{N}_0$. Note that this relaxation is too loose and the feasible solutions may not be guaranteed. In order to obtain the sub-optimal sparse scheduling solutions, we apply the following two penalty methods.
\subsubsection{Non-convex Penalty Approach}
Considering that one RB is allocated to at most one robot and one relay reflected in the constraint \eqref{P1-C6}, we add an additional constraint
\begin{equation}
  \|\mathrm{vec}(\mathbf{S}_m)\|_0\leq 1,\forall m, \label{Sm}
\end{equation}
to guarantee the sparsity requirement on each RB, where $\mathbf{S}_m\in\mathbb{R}^{K\times (N+1)}_{+}$ is defined as
\begin{equation}\label{Sm}
  \mathbf{S}_m=
  \begin{bmatrix}
  \phi_{1,0}^m &\phi_{1,1}^m &\cdots &\phi_{1,N}^m\\
  \vdots & \vdots & \vdots & \vdots\\
  \phi_{K,0}^m &\phi_{K,1}^m &\cdots &\phi_{K,N}^m\\
  \end{bmatrix}.
\end{equation}

Meanwhile, given that one robot can only occupy one RB and one relay (if necessary) required in the constraint \eqref{P1-C7}, we have to add the following additional constraint
\begin{equation}
  \|\mathrm{vec}(\mathbf{U}_k)\|_0= 1,\forall k,\label{Uk}
\end{equation}
where $\mathbf{U}_k\in\mathbb{R}^{M\times (N+1)}_{+}$ is given by
\begin{equation}\label{Uk}
  \mathbf{U}_k=
  \begin{bmatrix}
  \phi_{k,0}^1 &\phi_{k,1}^1 &\cdots &\phi_{k,N}^1\\
  \vdots & \vdots & \vdots & \vdots\\
  \phi_{k,0}^M &\phi_{k,1}^M &\cdots &\phi_{k,N}^M\\
  \end{bmatrix}.
\end{equation}

Therefore, the problem $\mathrm{P3}$ can be reformulated in the following equivalent form as
\begin{subequations}
\begin{align}
\!\!\mathrm{P4\!\!-\!\!NCP}:\min_{\Phi,\mathcal{\tilde{P},E}} \quad& \widetilde{P}_{\mathrm{tot}}\\
\mathrm{s.t.}\quad~ &\!\eqref{P1-C6},\!\eqref{P1-C7},\!\eqref{P3-C1}\!\!-\!\!\eqref{P3-C5},
\eqref{two-topReliability},\!\eqref{directReliability},\label{P4-C1}\\
&\phi_{k,n}^m\in[0,1],\forall k,m,n\in\mathcal{N}_0,\label{P4-C2}\\
&\|\mathrm{vec}(\mathbf{S}_m)\|_0\leq 1,\forall m,\label{P4-C3}\\
&\|\mathrm{vec}(\mathbf{U}_k)\|_0= 1,\forall k.\label{P4-C4}
\end{align}
\end{subequations}

The novel NCP approach is tailored to deal with the $\ell_0$-norm constraints, hence constraints \eqref{P4-C3}-\eqref{P4-C4} in the problem $\mathrm{P4\!\!-\!\!NCP}$ can be efficiently resolved. More specifically, the principle of the NCP method is given in detail as follows. For any vector $\mathbf{x}\in\mathbb{R}^{n}$, it has at most one non-zero element if and only if
\begin{equation}
  \|\mathbf{x}\|_a=\|\mathbf{x}\|_b,~1\leq a<b,
\end{equation}
where $a,b$ are real and $\|\cdot\|_a$ and $\|\cdot\|_b$ represent $\ell_a$-norm and $\ell_b$-norm, respectively. Alternatively, by adding a power exponent $v>0$, we have
\begin{equation}
    \|\mathbf{x}\|_a^v=\|\mathbf{x}\|_b^v,~1\leq a<b.
\end{equation}
Hence, the constraint \eqref{P4-C3} can be expressed as the following equivalent form
\begin{equation}
  \|\mathrm{vec}(\mathbf{S}_m)\|_a^v=\|\mathrm{vec}(\mathbf{S}_m)\|_b^v,~1\leq a<b,\forall m.
\end{equation}

Actually, the constraint \eqref{P4-C4} can be relaxed into an inequality constraint, i.e.,
\begin{equation}\label{inequalUk}
  \|\mathrm{vec}(\mathbf{U}_k)\|_0\leq 1,\forall k,
\end{equation}
and we adopt an operation similar to addressing the constraint \eqref{P4-C3}. The reason is that combing with the target payload constraint \eqref{P3-C1}, the inequality constraint \eqref{inequalUk} will always hold with equality. Thus, the constraint \eqref{P4-C4} can be written as
\begin{equation}
  \|\mathrm{vec}(\mathbf{U}_k)\|_a^v=\|\mathrm{vec}(\mathbf{U}_k)\|_b^v,~1\leq a<b,\forall k.
\end{equation}

Note that, in general, for some $1\leq a<b$, $\|\mathrm{vec}(\mathbf{S}_m)\|_a^v-\|\mathrm{vec}(\mathbf{S}_m)\|_b^v\geq 0$  and $\|\mathrm{vec}(\mathbf{U}_k)\|_a^v-\|\mathrm{vec}(\mathbf{U}_k)\|_b^v\geq 0$ always hold. In this paper, we adopt a smooth penalty by setting $a=1,b=2,v=2$.

To promote the sparsity that $\mathrm{vec}(\mathbf{S}_m)$ and $\mathrm{vec}(\mathbf{U}_k)$ have at most one non-zero element, we add a penalty term to the objective function, which is given by
\vspace{-2mm}
\begin{align}
  \!\!\mathcal{F}_1(\mathbf{S}_m,\mathbf{U}_k)\!&=\!\frac{\lambda}{2}\!\left(\sum_{m=1}^M\! \left(\|\mathrm{vec}(\mathbf{S}_m)\|_1^2\!-\!\|\mathrm{vec}(\mathbf{S}_m)\|_2^2\right)\right.\notag\\
  &~~~~~~~\left.+\!\sum_{k=1}^K\! \left(\|\mathrm{vec}(\mathbf{U}_k)\|_1^2\!-\!\|\mathrm{vec}(\mathbf{U}_k)\|_2^2\right)\!\right),
\end{align}
where $\lambda>0$ is a penalty factor. Note that the penalty $\mathcal{F}_1(\mathbf{S}_m,\mathbf{U}_k)$ is non-convex since the two terms included in it are in the form of difference-of-convex function. In view of this, the SCA method is adopted, which can iteratively approximate a non-convex function into a convex one \cite{Boyd2004}. Specifically, the first-order Taylor approximations are applied to the convex function $\|\mathrm{vec}(\mathbf{S}_m)\|_2^2$ and $\|\mathrm{vec}(\mathbf{U}_k)\|_2^2$, which are given by
\vspace{-2mm}
\begin{align}
  &\|\mathrm{vec}(\mathbf{S}_m)\|_2^2\!\approx\! 2 \left(\mathrm{vec}(\mathbf{S}_m^{(i)})\right)^T\!\!\mathrm{vec}(\mathbf{S}_m)\!-
  \!\|\mathrm{vec}(\mathbf{S}_m^{(i)})\|_2^2,\\
  &\!\|\mathrm{vec}(\mathbf{U}_k)\|_2^2\!\approx\! 2 \left(\mathrm{vec}(\mathbf{U}_k^{(i)})\right)^T\!\!\mathrm{vec}(\mathbf{U}_k)\!-
  \!\|\mathrm{vec}(\mathbf{U}_k^{(i)})\|_2^2,
\end{align}
where $\mathbf{S}_m^{(i)},\mathbf{U}_k^{(i)}$ are the values of $\mathbf{S}_m,\mathbf{U}_k$ in the $i$th iteration, respectively. Accordingly, the penalty term can be approximated by
\begin{align}
  &\!\!\mathcal{\widetilde{F}}_1(\mathbf{S}_m,\mathbf{U}_k)\approx\notag\\
  &\!\!\frac{\lambda}{2}\!\left(\sum_{m=1}^M\!\! \left(\!\|\mathrm{vec}(\mathbf{S}_m)\|_1^2\!-\!
2\!\left(\!\mathrm{vec}(\mathbf{S}_m^{(i)})\right)^T\!\!\mathrm{vec}(\mathbf{S}_m)
\!+\!\|\mathrm{vec}(\mathbf{S}_m^{(i)})\|_2^2\!\right)\!\right.\notag\\
  &\!\!+\!\left.\!\sum_{k=1}^K\!\! \left(\!\|\mathrm{vec}(\mathbf{U}_k)\|_1^2\!-\!
2\!\left(\!\mathrm{vec}(\mathbf{U}_k^{(i)})\right)^T\!\!\mathrm{vec}(\mathbf{U}_k)
\!+\!\|\mathrm{vec}(\mathbf{U}_k^{(i)})\|_2^2\!\right)\!\!\right)\!.
\end{align}

Based on this NCP approach, we can obtain the following penalized convex problem
\begin{subequations}
\begin{align}
\!\!\!\mathrm{P5\!\!-\!\!NCP}:\min_{\Phi,\mathcal{\widetilde{P},E}} \quad& \widetilde{P}_{\mathrm{tot}}+\mathcal{\widetilde{F}}_1(\mathbf{S}_m,\mathbf{U}_k)\\
\mathrm{s.t.}\quad~ &\!\eqref{P1-C6},\!\eqref{P1-C7},\!\eqref{P3-C1}\!\!-\!\!\eqref{P3-C5},
\eqref{two-topReliability},\!\eqref{directReliability},\!\eqref{P4-C2}.\label{P5-C1}
\end{align}
\end{subequations}
\subsubsection{Quadratic Penalty Approach}
Note that the binary relay selection and RB assignment constraint \eqref{P1-C5} can be expressed in the following equivalent form
\begin{align}
  &\phi_{k,n}^m\in[0,1],\forall k,m,n\in\mathcal{N}_0,\\
  &\sum_{k=1}^K \sum_{m=1}^M \sum_{n=0}^N \left(\phi_{k,n}^m-(\phi_{k,n}^m)^2\right)\leq 0.
\end{align}

Correspondingly, the problem $\mathrm{P3}$ can be equivalently reformulated as
\begin{subequations}
\begin{align}
\!\!\mathrm{P4\!\!-\!\!QP}:\min_{\Phi,\mathcal{\widetilde{P},E}} \quad& \widetilde{P}_{\mathrm{tot}}\\
\mathrm{s.t.}\quad~ &\eqref{P1-C6},\!\eqref{P1-C7},\!\eqref{P3-C1}\!\!-\!\!\eqref{P3-C5},\!
\eqref{two-topReliability},\!\eqref{directReliability},\!\label{P4-QP-C1}\\
&\phi_{k,n}^m\in[0,1],\forall k,m,n\in\mathcal{N}_0,\label{P4-QP-C2}\\
&\sum_{k=1}^K \sum_{m=1}^M \sum_{n=0}^N \left(\phi_{k,n}^m-(\phi_{k,n}^m)^2\right)\leq 0,\label{P4-QP-C3}
\end{align}
\end{subequations}
In the problem $\mathrm{P4\!\!-\!\!QP}$, the only non-convexity is derived from the constraint \eqref{P4-QP-C3}, where the left-hand side is a concave function. To deal with this constraint, we give the following Theorem \ref{theorem}.
\begin{theorem}\label{theorem}
  For a sufficiently large constant $\beta\gg 1$, the optimization problem $\mathrm{P4\!\!-\!\!QP}$ is equivalent to the following problem
\begin{subequations}
\begin{align}
\!\!\mathrm{P5\!\!-\!\!QP}:\min_{\Phi,\mathcal{\widetilde{P},E}} \quad& \widetilde{P}_{\mathrm{tot}}+\mathcal{F}_2(\phi_{k,n}^m)\label{P5-QP-obj}\\
\mathrm{s.t.}\quad~ &\eqref{P1-C6},\!\eqref{P1-C7},\!\eqref{P3-C1}\!\!-\!\!\eqref{P3-C5},\!
\eqref{two-topReliability},\!\eqref{directReliability},\!\label{P5-QP-C1}\\
&\phi_{k,n}^m\in[0,1],\forall k,m,n\in\mathcal{N}_0,\label{P5-QP-C2}
\end{align}
\end{subequations}
where
\begin{equation}
  \mathcal{F}_2(\phi_{k,n}^m)=\beta\sum_{k=1}^K \sum_{m=1}^M \sum_{n=0}^N \left(\phi_{k,n}^m-(\phi_{k,n}^m)^2\right),
\end{equation}
is a quadratic penalty term, and $\beta$, as a penalty factor, penalizes the objective function for any $\phi_{k,n}^m$ that is not equal to $0$ or $1$.
\end{theorem}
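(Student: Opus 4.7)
The plan is to prove equivalence via the nonnegativity of $\phi_{k,n}^m(1-\phi_{k,n}^m)$ on $[0,1]$, together with a standard exact-penalty argument. First I would verify that constraint \eqref{P4-QP-C3} combined with the box constraint \eqref{P4-QP-C2} is a continuous reformulation of the original binary constraint \eqref{P1-C5}: each summand $\phi_{k,n}^m - (\phi_{k,n}^m)^2 = \phi_{k,n}^m(1-\phi_{k,n}^m)$ is nonnegative on $[0,1]$, so the sum being at most zero forces every summand, and hence every indicator, to take value in $\{0,1\}$. Consequently P4-QP shares the same feasible set and the same optimal value, call it $V^\star$, as the original binary-constrained problem P3.

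Next I would observe that P5-QP is obtained from P4-QP by deleting \eqref{P4-QP-C3} and appending the corresponding term $\mathcal{F}_2$ to the objective. Any P4-QP-feasible point is P5-QP-feasible with $\mathcal{F}_2=0$, so the two objectives coincide there, giving $\mathrm{val}(\mathrm{P5\text{-}QP})\le V^\star$ for every $\beta>0$. It therefore suffices to show that for sufficiently large $\beta$, every optimizer of P5-QP satisfies $\mathcal{F}_2=0$, because in that case it is automatically P4-QP-feasible and attains a value at least $V^\star$, yielding equality of the two optimal values and of the optimizer sets.

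I would establish this by contradiction. Let $(\Phi^\beta,\widetilde{\mathcal{P}}^\beta,\mathcal{E}^\beta)$ be optimal for P5-QP. Comparing against any P4-QP optimum (which is P5-QP-feasible with objective $V^\star$) yields $\widetilde{P}_{\mathrm{tot}}(\Phi^\beta,\widetilde{\mathcal{P}}^\beta)+\beta\,\mathcal{F}_2(\Phi^\beta)\le V^\star$, and together with $\widetilde{P}_{\mathrm{tot}}\ge 0$ this forces $\mathcal{F}_2(\Phi^\beta)\le V^\star/\beta$. Constraints \eqref{P1-C6}, \eqref{P1-C7}, \eqref{P4-QP-C2} confine $\Phi$ to a compact polytope whose vertices are exactly the binary assignments, and $\mathcal{F}_2$ vanishes on this polytope precisely at the vertices. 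Since $V^\star/\beta\to 0$, $\Phi^\beta$ is driven toward the vertex set, so any accumulation point (extracted using compactness of $\Phi^\beta$ and the bounds on $(\widetilde{\mathcal{P}}^\beta,\mathcal{E}^\beta)$ induced by \eqref{P3-C1}--\eqref{P3-C2} together with the concavity of the rates established earlier in the excerpt) is P4-QP-feasible, closing the argument asymptotically.

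The main obstacle is converting the asymptotic statement into a constructive finite-$\beta$ threshold. One natural route is to exploit that the penalty $-\beta\sum(\phi_{k,n}^m)^2$ has Hessian $-2\beta I$ in $\Phi$; for $\beta$ larger than any bound on the curvature of $\widetilde{P}_{\mathrm{tot}}$ in $\Phi$, the penalized objective becomes concave in $\Phi$ on the compact polytope, so that its minimizers must sit at vertices, giving exact binarity at a finite $\beta$. An alternative route is a rounding argument producing a P4-QP-feasible competitor from any non-binary candidate with a $\beta$-independent upper bound on the induced increase in $\widetilde{P}_{\mathrm{tot}}$; this rounding is delicate because of the nonlinear coupling between $\Phi$ and $(\widetilde{\mathcal{P}},\mathcal{E})$ through \eqref{P3-C1}--\eqref{P3-C2}, which is where the technical work would concentrate.
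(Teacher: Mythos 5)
Your strategy is the classical primal exact-penalty argument, which is genuinely different from the paper's proof: the paper writes $\mathrm{P4\text{-}QP}$ as $\min\max_{\beta\geq 0}\mathcal{L}$, forms the dual $\max_{\beta\geq 0}\min\mathcal{L}$, and uses weak duality together with the monotonicity of the dual function $\Omega(\beta)$ in $\beta$ to show, by a two-case analysis, that the dual optimum must satisfy $\sum_{k,m,n}\bigl(\phi_{k,n}^{m\ast}-(\phi_{k,n}^{m\ast})^2\bigr)=0$, whence the duality gap is zero and $\Omega(\beta)=q^{\ast}$ for all $\beta\geq\beta^{\ast}$. Your first two steps are sound and are also implicit in the paper: \eqref{P4-QP-C2} and \eqref{P4-QP-C3} together enforce binarity of every indicator, and $\mathrm{val}(\mathrm{P5\text{-}QP})\leq V^{\star}$ for every $\beta>0$ because each $\mathrm{P4\text{-}QP}$-feasible point is $\mathrm{P5\text{-}QP}$-feasible with zero penalty. (A minor slip: $\mathcal{F}_2$ already contains the factor $\beta$, so your comparison should read $\widetilde{P}_{\mathrm{tot}}+\mathcal{F}_2\leq V^{\star}$, giving $\sum(\phi-\phi^2)\leq V^{\star}/\beta$, not $\beta\,\mathcal{F}_2\leq V^{\star}$.)

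The genuine gap is that everything after that point is asymptotic, whereas the theorem asserts exact equivalence at a finite $\beta$. Showing that accumulation points of $\Phi^{\beta}$ as $\beta\to\infty$ are binary does not show that any particular $\beta\gg 1$ produces a binary optimizer, and neither of your proposed repairs closes this. Route (a) is based on a premise that is vacuous here: the transformed objective $\widetilde{P}_{\mathrm{tot}}$ does not depend on $\Phi$ at all, so the penalized objective is already concave in $\Phi$ for every $\beta>0$; the reason this does not force vertex solutions is that $\Phi$ is coupled to $(\widetilde{\mathcal{P}},\mathcal{E})$ through the nonlinear constraints \eqref{P3-C1}--\eqref{P3-C2}, so the extreme points of the joint feasible set need not project onto vertices of the $\Phi$-polytope, and increasing $\beta$ does not change this geometry. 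Route (b), the rounding construction, is the honest way to get a finite threshold, but it is precisely the part you defer. As written, the proposal establishes $\mathrm{val}(\mathrm{P5\text{-}QP})\leq V^{\star}$ and a limiting statement, but not the claimed equivalence; to complete it you should either carry out the rounding argument or adopt the paper's abstract-Lagrangian-duality case analysis, in which a strictly positive penalty at the dual optimum is excluded because $\Omega(\beta)$ would then grow without bound in $\beta$, contradicting the weak-duality bound $\Omega(\beta)\leq q^{\ast}$.
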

\begin{proof}
  The proof of the Theorem \ref{theorem} can be referred to the Appendix \ref{appendixA}.
\end{proof}

The transformed equivalent problem $\mathrm{P5\!\!-\!\!QP}$ is still non-convex since the quadratic penalty term in \eqref{P5-QP-obj} is a concave function. By virtue of the first-order Taylor approximation, we can approximate $(\phi_{k,n}^m)^2$ by
\begin{equation}
  (\phi_{k,n}^m)^2\approx 2(\phi_{k,n}^m)^{(i)}\phi_{k,n}^m-\left((\phi_{k,n}^m)^{(i)}\right)^2,
\end{equation}
where $(\phi_{k,n}^m)^{(i)}$ is the value of the variable $\phi_{k,n}^m$ in the $i$th iteration. Therefore, the penalty term can be approximated by
\begin{equation}
  \!\!\mathcal{\widetilde{F}}_2(\phi_{k,n}^m)\!\approx\!\!\beta\!\sum_{k=1}^K\! \sum_{m=1}^M \!\sum_{n=0}^N\! \left(\!\!\phi_{k,n}^m\!-\!2(\phi_{k,n}^m)^{(i)}\phi_{k,n}^m\!+\!\left(\!(\phi_{k,n}^m)^{(i)}\!\right)^2\!\right).
\end{equation}

The problem $\mathrm{P5\!\!-\!\!QP}$ can be reformulated as the following penalized convex problem
\begin{subequations}
\begin{align}
\!\!\mathrm{P6\!\!-\!\!QP}:\min_{\Phi,\mathcal{\widetilde{P},E}} \quad& \widetilde{P}_{\mathrm{tot}}+\mathcal{\widetilde{F}}_2(\phi_{k,n}^m)\label{P6-QP-obj}\\
\mathrm{s.t.}\quad~ &\eqref{P1-C6},\!\eqref{P1-C7},\!\eqref{P3-C1}\!\!-\!\!\eqref{P3-C5},\!
\eqref{two-topReliability},\!\eqref{directReliability},\!\eqref{P5-QP-C2}.\label{P6-QP-C1}
\end{align}
\end{subequations}
\subsection{Proposed Algorithm}
\subsubsection{Non-convex Penalty Approach}
According to the preceding analysis, we propose an NCP-based SCA iterative algorithm to solve the problem $\mathrm{P2}$. The algorithm is summarized in Algorithm \ref{Alg1}. Note that a stationary-point solution of the problem $\mathrm{P2}$ can be obtained by solving a sequence of the penalized convex problem $\mathrm{P5\!\!-\!\!NCP}$ until convergence \cite{Wang2019}.
\begin{algorithm}[htbp]
\caption{\textbf{:} NCP-based SCA Iterative Algorithm for Solving Problem $\mathrm{P2}$}
\label{Alg1}
\begin{spacing}{1.3}
	\begin{algorithmic}[1]
		\STATE \textbf{Initialize}~iteration index $i=0$, feasible $(\phi_{k,n}^m)^{(0)},\forall k,m,n\in\mathcal{N}_0$, initial penalty factor $\lambda^{(0)}>0$, $\eta>1$, tolerance $\epsilon>0$.
        \STATE Obtain $\mathbf{S}_m^{(0)},\mathbf{U}_k^{(0)}$ based on equations \eqref{Sm} and \eqref{Uk}, respectively.
		\REPEAT
        \STATE Set $i=i+1$.
		\STATE Obtain $\{\phi_{k,n}^m,\tilde{p}_{k,n}^m,\tilde{p}_{n,D}^{m,k},\tilde{p}_{k,D}^m\}$ by solving $\mathrm{P5\!\!-\!\!NCP}$ and restore $\tilde{p}_\mathrm{tot}^{(i)}$.
        \STATE Update $(\phi_{k,n}^m)^{(i)}=\phi_{k,n}^m,\forall k,m,n\in\mathcal{N}_0$.
        \STATE Update $\mathbf{S}_m^{(i)},\mathbf{U}_k^{(i)}$ based on equations \eqref{Sm} and \eqref{Uk}, respectively.		
        \STATE Update $\lambda^{(i)}=\eta \lambda^{(i-1)}$.
		\UNTIL $|\tilde{p}_\mathrm{tot}^{(i)}-\tilde{p}_\mathrm{tot}^{(i-1)}|\leq \epsilon$ and $\mathcal{F}_1(\mathbf{S}_m,\mathbf{U}_k)\leq \epsilon$.
		\STATE \textbf{Output} $\{\phi_{k,n}^m,\tilde{p}_{k,n}^m,\tilde{p}_{n,D}^{m,k},\tilde{p}_{k,D}^m\}$.
	\end{algorithmic}
\end{spacing}
\end{algorithm}
\subsubsection{Quadratic Penalty Approach}
Based on the prior discussions, we propose a QP-based SCA iterative algorithm for solving the problem $\mathrm{P2}$. The algorithm is summarized in Algorithm \ref{Alg2}, which can converge to a locally optimal solution of the problem $\mathrm{P2}$ by solving a sequence of the penalized convex problem $\mathrm{P6\!\!-\!\!QP}$ until convergence \cite{SunIToC2017}.
\begin{algorithm}[htbp]
\caption{\textbf{:} QP-based SCA Iterative Algorithm for Solving Problem $\mathrm{P2}$}
\label{Alg2}
\begin{spacing}{1.3}
	\begin{algorithmic}[1]
		\STATE \textbf{Initialize}~iteration index $i=0$, feasible $(\phi_{k,n}^m)^{(0)},\forall k,m,n\in\mathcal{N}_0$, initial penalty factor $\beta^{(0)}>0$, $\eta>1$, tolerance $\epsilon>0$.
		\REPEAT
        \STATE Set $i=i+1$.
		\STATE Obtain $\{\phi_{k,n}^m,\tilde{p}_{k,n}^m,\tilde{p}_{n,D}^{m,k},\tilde{p}_{k,D}^m\}$ by solving $\mathrm{P6\!\!-\!\!QP}$ and restore $\tilde{p}_\mathrm{tot}^{(i)}$.
        \STATE Update $(\phi_{k,n}^m)^{(i)}=\phi_{k,n}^m,\forall k,m,n\in\mathcal{N}_0$.		
        \STATE Update $\beta^{(i)}=\eta \beta^{(i-1)}$.
		\UNTIL $|\tilde{p}_\mathrm{tot}^{(i)}-\tilde{p}_\mathrm{tot}^{(i-1)}|\leq \epsilon$ and $\mathcal{F}_2(\phi_{k,n}^m)\leq \epsilon$.
		\STATE \textbf{Output} $\{\phi_{k,n}^m,\tilde{p}_{k,n}^m,\tilde{p}_{n,D}^{m,k},\tilde{p}_{k,D}^m\}$.
	\end{algorithmic}
\end{spacing}
\end{algorithm}
\section{Numerical Results}
In this section, some numerical results are presented to evaluate the performance of the proposed relay-aided two-phase transmission protocol. In order to explain the subsequent simulation results explicitly, we plot two examples of the smart factory scenario setting. Then, the performance of the applied two penalty methods, i.e., NCP and QP, are compared in terms of power consumption under each channel realization and the average convergence performance. Moreover, the impacts of the number of robots, the maximum allowable packet error probability, the target number of transmission data bits and the number and locations of relays on the required total transmit power are discussed.

The considered simulation scenario is shown in Fig. \ref{Examples}, where the controller is located in the center of the smart factory which is modeled as a circle of radius $r=300$ meters. The relays are uniformly deployed on the inner circle at a distance $\theta r$ from the controller, where $\theta$ is a distance factor. The total $K$ robots are randomly and uniformly distributed in the smart factory. Note that $d_{k}$ in meter is the distance from the robot $k$ to the controller, $d_{kn}$ denotes the distance from the robot $k$ to the relay $n$, and $d_{n}$ is the distance from the relay $n$ to the controller. The Rayleigh fading components $\tilde{h}_{k,n}^m,\tilde{h}_{n,D}^{m,k},\tilde{h}_{k,D}^m$ are assumed to be independent and identically distributed circularly symmetric complex Gaussian random variables with zero mean and unit variance. The path loss component is modeled as $35.3+37.6\log_{10}(d)$ in dB. Here, two example settings for $K=8,N=2$ and $K=8,N=4$ are shown in Fig. \ref{N2} and Fig. \ref{N4}, which can assist the understanding of some specific simulation results in the subsequent figures. The simulation results are averaged on $100$ channel realizations. Other parameters are listed in Table \ref{table1}, unless otherwise specified.
\begin{figure}[!t]\centering
\subfloat[$K=8,N=2$]{
\label{N2}
\includegraphics[width=.75\linewidth]{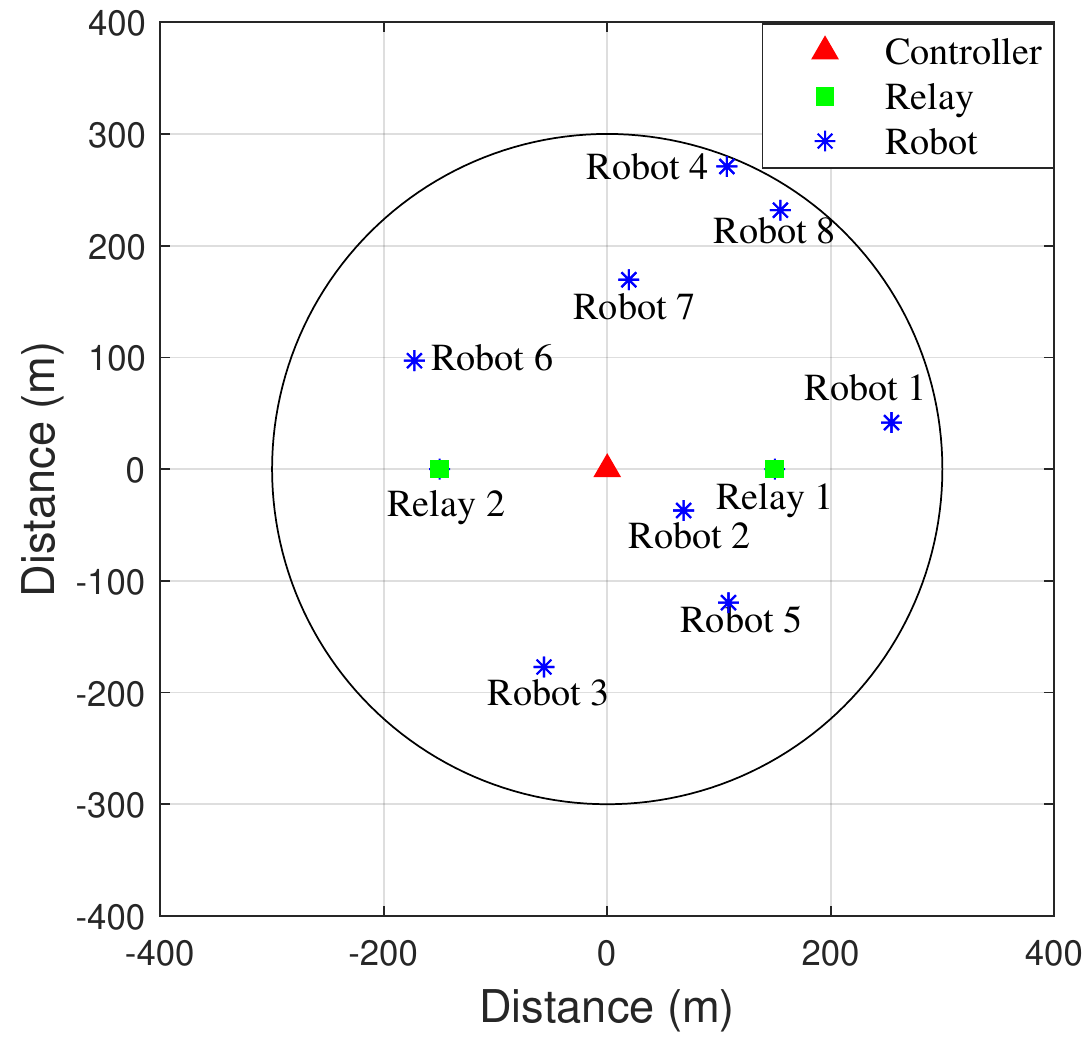}}\\[0.01mm]
\subfloat[$K=8,N=4$]{
\label{N4}
\includegraphics[width=.75\linewidth]{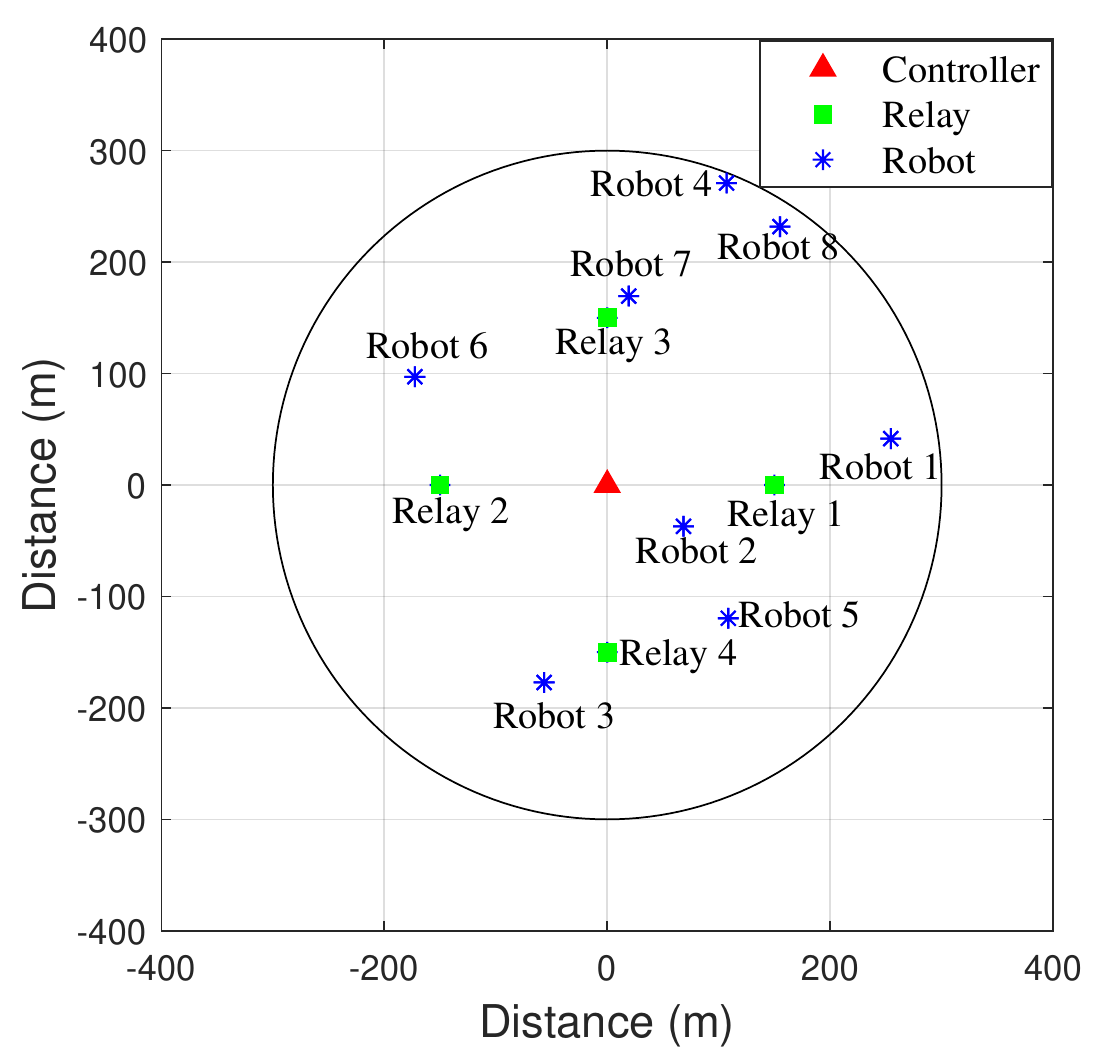}}
\caption{Examples of the smart factory scenario setting.}
\label{Examples}
\end{figure}

\begin{table}[htbp]{
\centering
\caption{Simulation Parameters}
\label{table1}
\setlength{\tabcolsep}{1.2mm}{
\begin{tabular}{lll}
  \toprule
  Symbol & Parameter & Value\\
  \midrule
  $K$ & number of robots & $2,3,4,5,6,7,8$\\
  $M$ & number of RBs & $10$\\
  $N$ & number of relays & $2,4$\\
  $\tau_1$ & duration of Phase I & $0.5$ ms\\
  $\tau_2$ & duration of Phase II & $0.5$ ms\\
  $W$ & bandwidth of each RB & $360$ KHz\\
  $\varepsilon_{\max}$ & maximum PEP & $10^{-9}\!\sim¡¢£¡10^{-3}$\\
  $B_k\!=\!B,\forall k$ & number of data bits & $800,1000$\\
  $N_0$ & noise power spectral density & $-174$ dBm/Hz\\
  $\theta$ & distance factor & $0.2\!\sim\!0.7$\\
  $r$ & radius of smart factory & $300$ m\\
  $\lambda^{(0)}$ & initial penalty factor of NCP & $0.001$\\
  $\beta^{(0)}$ & initial penalty factor of QP & $0.001$\\
  $\eta$ & scaling factor & $2.5$\\
  $\epsilon$ & tolerance & $10^{-4}$\\
  \bottomrule
\end{tabular}}}
\end{table}

Fig. \ref{PerformanceContrast} compares the performance of the NCP-based algorithm and QP-based algorithm from the point views of total transmit power under each channel realization and the average convergence performance. From Fig. \ref{power_vs_channel}, the NCP approach almost performs the same as the QP method under $100$ channel realizations. Meanwhile, from the Fig. \ref{convergence}, it can be readily obtained that both NCP and QP methods can reach fast convergence. This verifies the efficiency of such two penalty approaches for solving the formulated problem. What's more, the QP-based algorithm performs faster in terms of convergence rate.
\begin{figure}[!t]\centering
	\subfloat[$\tilde{p}_\mathrm{tot}$ under $100$ channel realizations]{
    \label{power_vs_channel}
    \includegraphics[width=.8\linewidth]{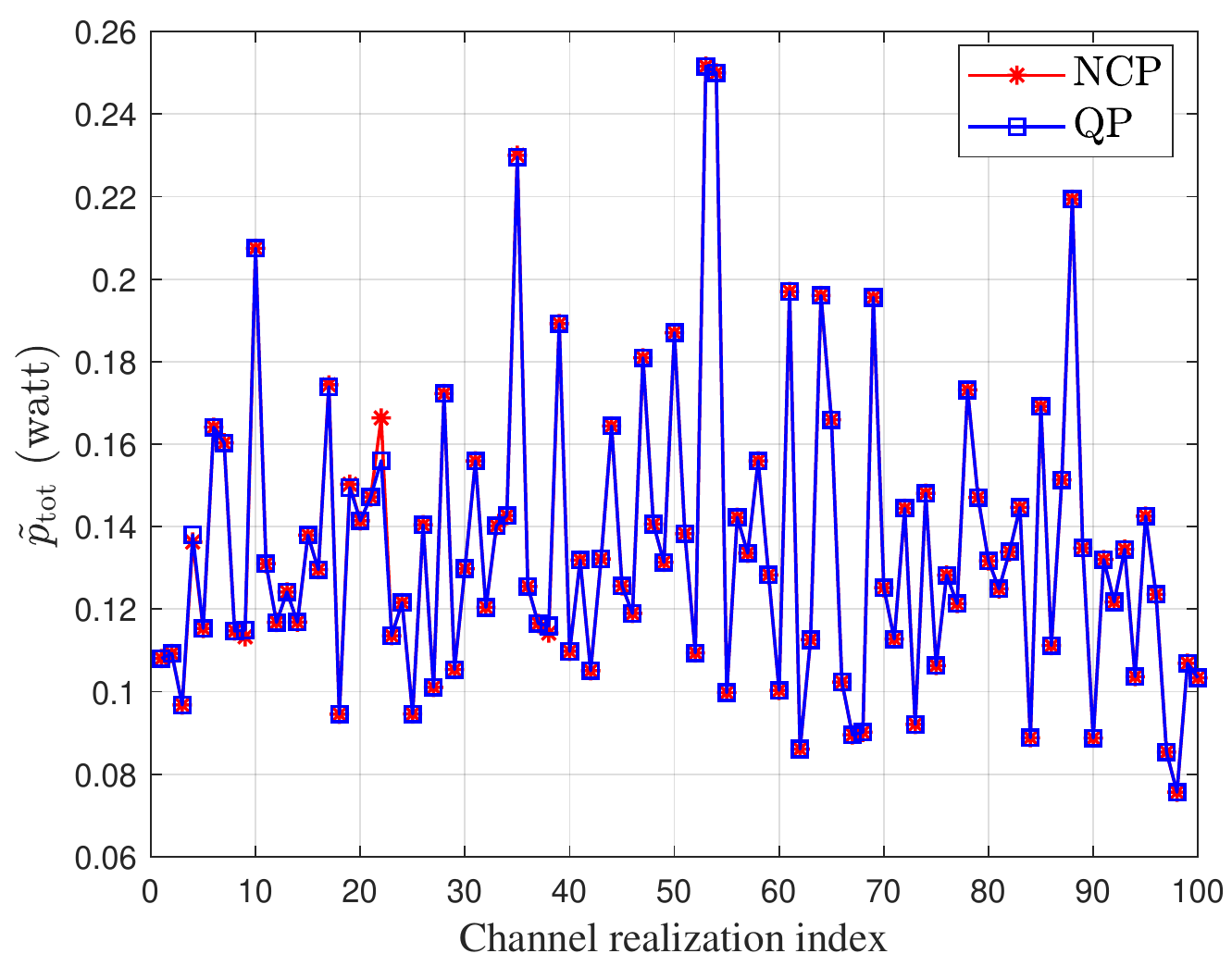}}\\[0.01mm]
	\subfloat[Average convergence performance]{
    \label{convergence}
    \includegraphics[width=.8\linewidth]{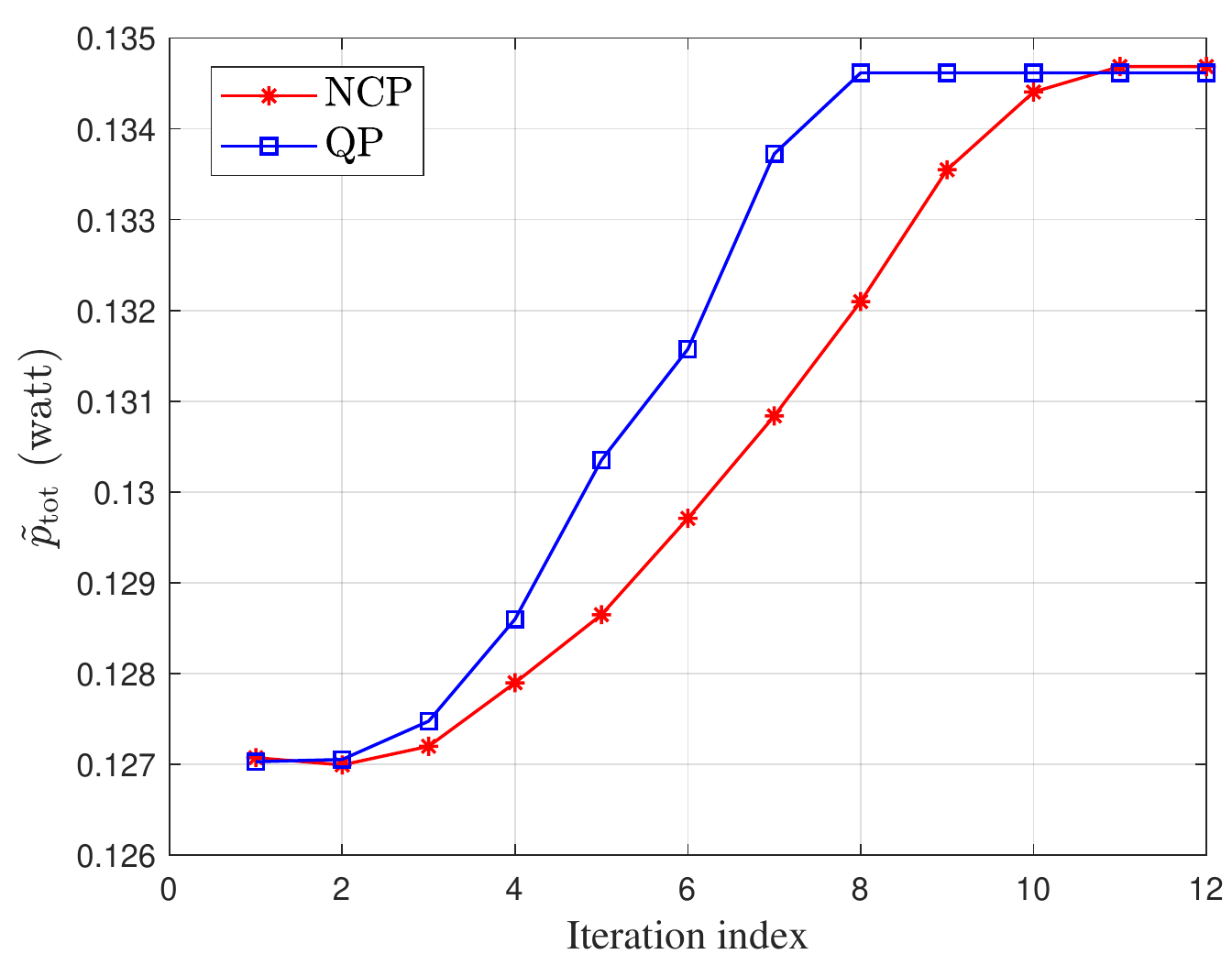}}
	\caption{Performance comparison between NCP and QP for $K=4,N=4,\varepsilon_{\max}=10^{-5},\theta=0.5,B=1000$ bits.}
	\label{PerformanceContrast}
\end{figure}

In Fig. \ref{Ptot_numberofRobots}, the total power consumption significantly increases as the number of robots grows. However, the trend of power consumption increase differs under the different number of relays and reliability requirements. It can be observed that the consumed power is maximum when the number of relays is small and the reliability requirement is extremely high (e.g., $N=2,\varepsilon_{\max}=10^{-8}$). This is because more relays can offer more spatial diversity gain, and a high tolerance for errors (large allowable packet error probability) leads to lower power for achieving the same target rate which can be obtained from the approximated capacity in the finite blocklength regime. Moreover, the impact of the reliability requirement on the power consumption associated with $N=4$ is lower than that associated with $N=2$, which verifies that more spatial diversity gain provided by relays can greatly improve the robots' transmission reliability. Furthermore, it is also seen that when the number of robots is $4$, the total power consumption experiences considerable growth especially for $N=2$ compared with $N=4$. The reason is that the $4$th robot is an edge-user (from Fig. \ref{Examples}), which requires enormous transmit power to enable the ultra-reliable direct transmission of the  target number of data bits. However, for the case of $N=4$, the $4$th robot can transmit packets via the help of the relay $3$, which offers spatial diversity for reliable transmission, thereby lowering the power consumption. In addition, when the number of robots is small (e.g. less than $8$ robots), the corresponding power consumption of the case labeled with $N=2,\varepsilon_{\max}=10^{-5}$ is larger than that of the green dashed line labeled with $N=4,\varepsilon_{\max}=10^{-8}$, but as the number of robots increase (e.g. $8$ robots) especially when the $8$th robot is an edge-user, the situation reverses since the reliability requirement denominates in this case.
\begin{figure}[!t]\centering
	\includegraphics[width=.82\linewidth]{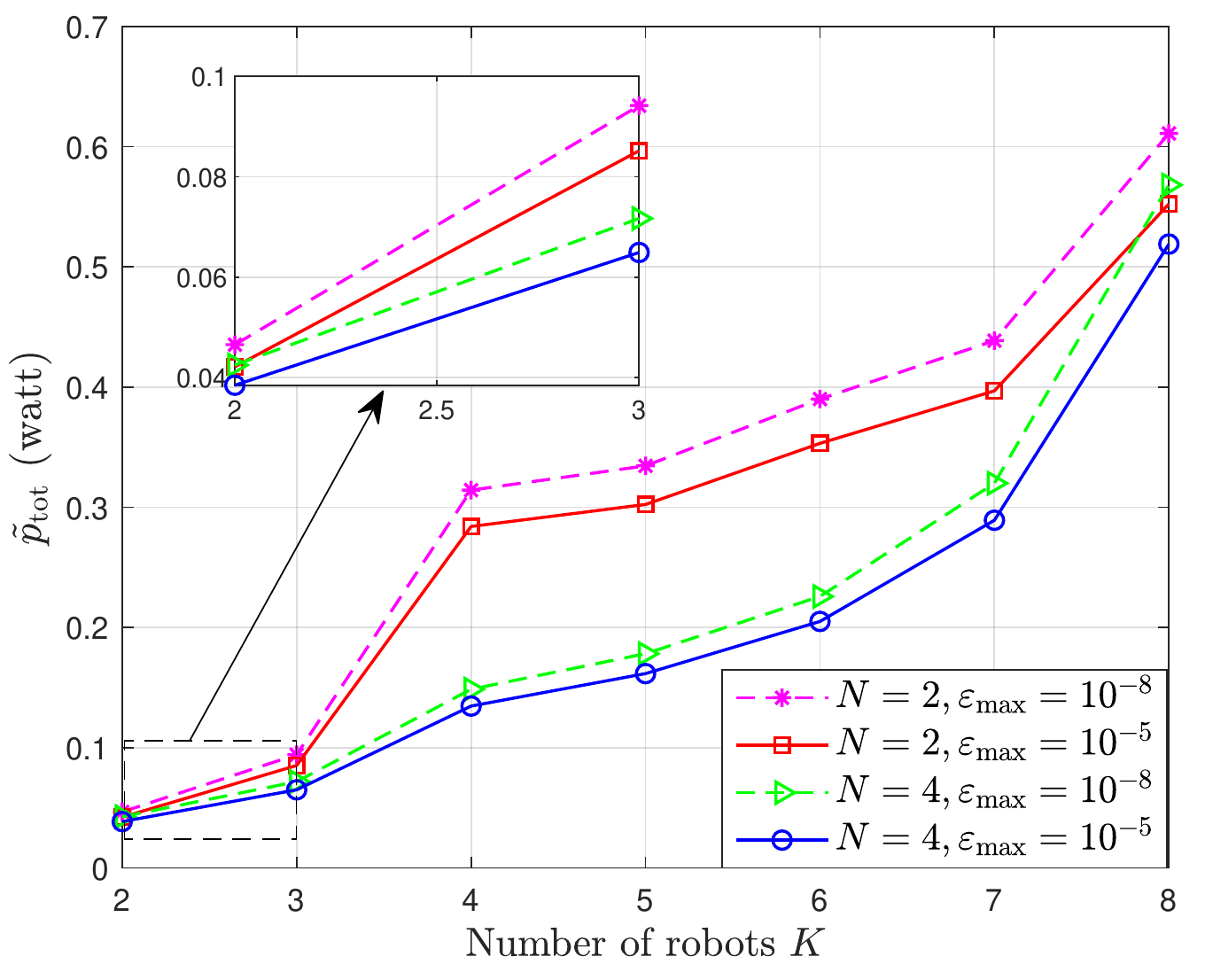}
	\caption{$\tilde{P}_{\mathrm{tot}}$ versus number of robots $K$ under different values of $N$ and $\varepsilon_{\max}$ for $\theta=0.5,B=1000$ bits.}
	\label{Ptot_numberofRobots}
\end{figure}

Fig. \ref{Ptot_epsilon} presents the impact of the reliability requirement on $\tilde{P}_{\mathrm{tot}}$ under different values of $N$ and $B$. As the reliability requirement imposed on the system becomes less demanding, the required total transmit power $\tilde{P}_{\mathrm{tot}}$ decreases significantly. The power gap is considerable especially when the target number of data bits $B$ is large and the available number of relays is small (e.g. $N=2,B=1000$). Moreover, it can be seen that with the use of more relays, the required power for all robots to complete the transmission task specified with $B$ bits and $\varepsilon_{\max}$ can be hugely reduced. This reveals the advantage of the proposed relay-aided two-phase transmission protocol. More specifically, with more relays, there comes more available paths for robots to upload packets to the controller instead of direct transmission, thereby consuming lower power via channels with relatively good channel conditions.
\begin{figure}[!t]\centering
	\includegraphics[width=.83\linewidth]{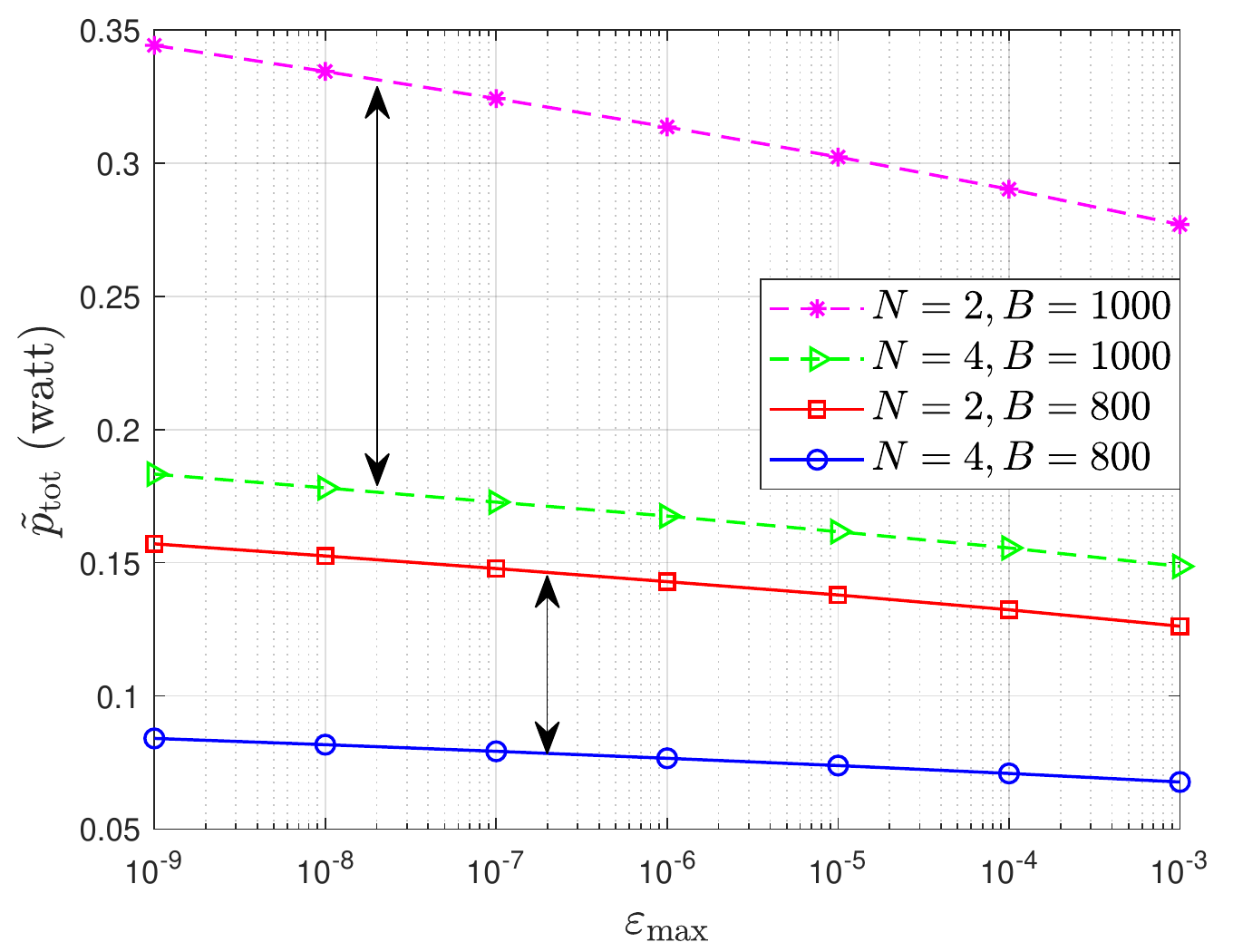}
	\caption{$\tilde{P}_{\mathrm{tot}}$ versus $\varepsilon_{\max}$ under different values of $N$ and $B$ for $\theta=0.5,K=5$.}
	\label{Ptot_epsilon}
\end{figure}

In Fig. \ref{Ptot_theta}, we plot the impact of the relay locations, where $\theta$, a distance factor, is used to qualitatively represent the distance between the relay and the controller. A larger value of $\theta$ indicates a larger distance from the relay to the controller. As can be seen from Fig. \ref{Ptot_theta}, when the relays are deployed nearer or farther from the controller (e.g. $\theta=0.2,0.7$), the required power consumption is high. The reasons are as follows. For the short-distance deployment case (e.g. $\theta=0.2$), the channel condition from the robot to relay is comparative to that from the robot to the controller, thus relays do not function well. Likewise, for the long-distance deployment one (e.g. $\theta=0.7$), the channel condition from the robot to the controller is comparative to that from the relay to the controller, hence direct transmission is preferred. As anticipated, the power consumption $\tilde{P}_{\mathrm{tot}}$ decreases first and then increases with $\theta$. That is, there exists an optimal relay location for deployment. It can be obtained that when $\theta=0.4$, the spatial diversity gain provided by relays can be fully exploited.
\begin{figure}[!t]\centering
	\includegraphics[width=.83\linewidth]{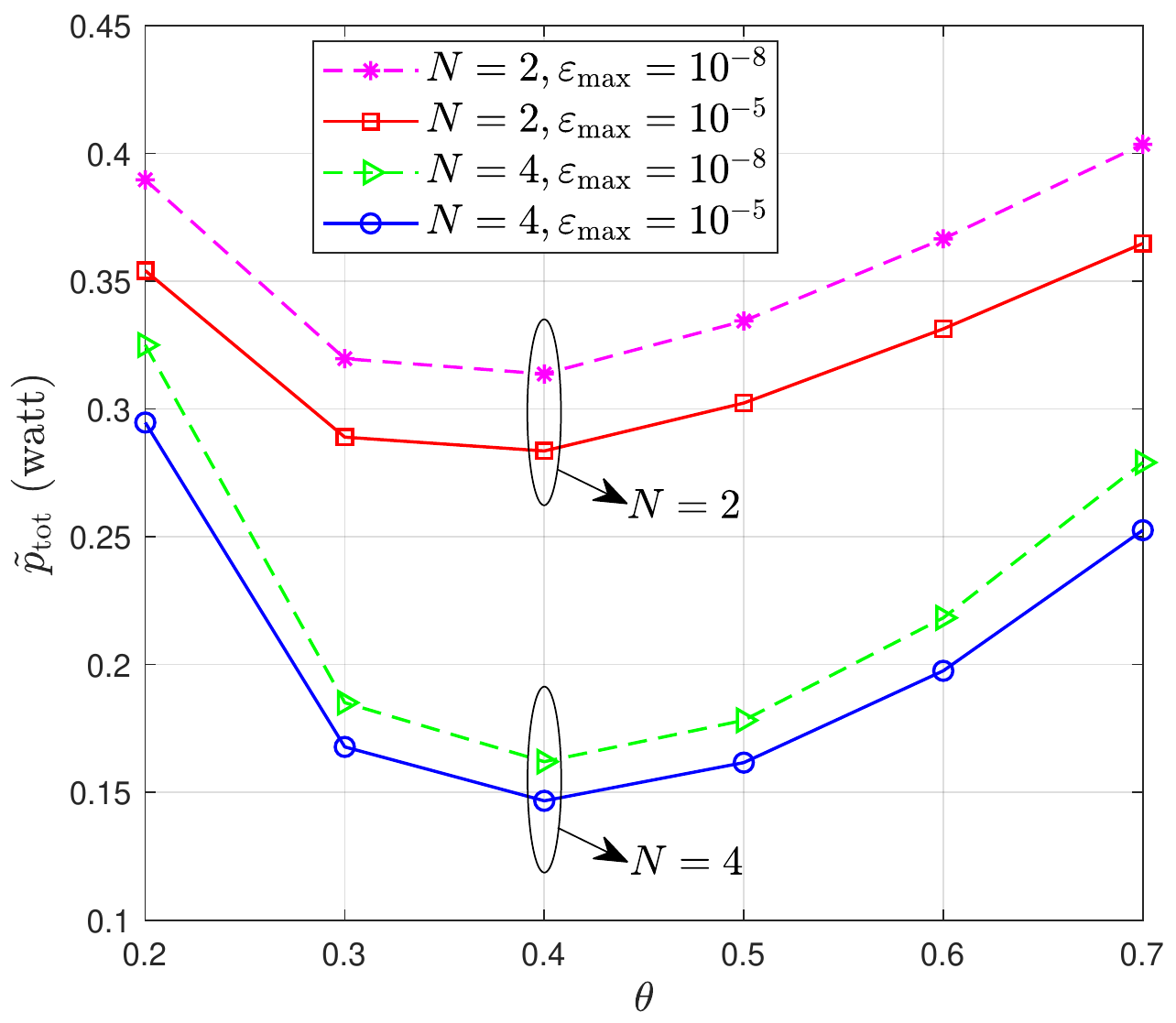}
	\caption{$\tilde{P}_{\mathrm{tot}}$ versus $\theta$ under different values of $N$ and $\varepsilon_{\max}$ for $\theta=0.5,K=5,B=1000$ bits.}
	\label{Ptot_theta}
\end{figure}

In order to further illustrate the influence of relay locations, we plot the proportion of users under different transmission mode marked with direct transmission (DT), assisted transmission by relay $1$, relay $2$, relay $3$ and relay $4$ in Fig. \ref{userPercentage}, respectively, which verifies the discussions about Fig. \ref{Ptot_theta}. More specifically, when $\theta=0.7$, about $60\%$ of robots transmit packets directly to the controller, while up to $75\%$ of robots deliver packets via two-phase relay-aided transmission for the case of $\theta=0.4$. Further, when $\theta=0.4$, the proportion of robots under each transmission mode (if used) is comparative, which demonstrates the optimality of relay locations. Moreover, the utilization rate of the relay $2$ is quite low because most robots are far away from relay $2$ in the example settings.
\begin{figure}[!t]\centering
	\includegraphics[width=.84\linewidth]{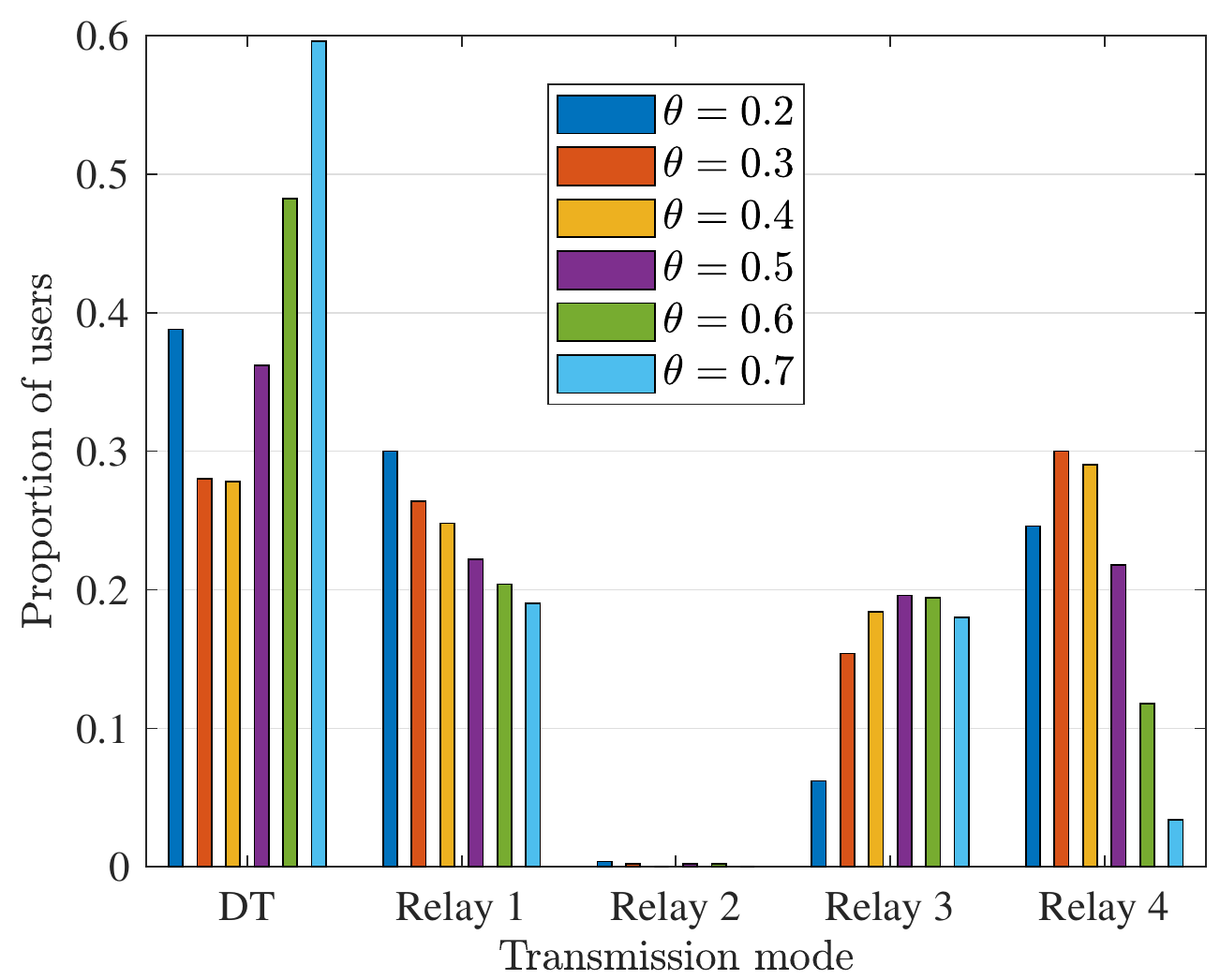}
	\caption{Proportion of users under different transmission mode for $N=4,\varepsilon_{\max}=10^{-5}, \theta=0.5,K=5,B=1000$ bits.}
	\label{userPercentage}
\end{figure}
\section{Conclusion}
In this paper, we have proposed a relay-aided two-phase transmission protocol for the smart factory scenario to enable all robots' ultra-reliable uplink target number of critical data transmission within a predefined delay in an energy-efficient manner. This protocol was formulated as the joint optimization problem of relay selection, RB assignment, and power allocation subject to reliability and target payload constraints for total power minimization. The key challenges towards this formulated problem mainly included neither convex nor concave rate expressions, mutual coupling of the optimization variables, and binary indicator constraints. Different from common methods tailored to tackle the coupling of variables, we fully tapped and utilized the properties of the relative entropy function and obtained the equivalent form with the advantage of not introducing extra constraints. Moreover, in view that the reliability requirements under two possible transmission modes (direct transmission and cooperative transmission) are coupled in one constraint, we first exploited the big-M technique to decouple the reliability requirements in each transmission mode and then adopted a near-optimal combination of decoding error probabilities for the cooperative transmission. Furthermore, NCP and QP approaches have been applied to handle the binary indicator constraints to promote the solutions sparse. Numerical results validate the efficiency of the proposed NCP and QP based algorithms. In addition, the influences of reliability, the number and location of relays, the number of robots, the target number of data bits on the total power consumption are studied, which provide some interesting insights into the future relay-assisted URLLC research.

\begin{appendices}
  \section{Proof of Theorem \ref{theorem}}\label{appendixA}
  The proof can be given based on the abstract Lagrangian duality \cite{Goh2002} theory. Specifically, the Lagrangian associated with the problem $\mathrm{P4\!\!-\!\!QP}$ is defined as
  \begin{equation}
    \mathcal{L}(\Phi,\mathcal{\tilde{P},E},\beta)\!=\!\tilde{P}_{\mathrm{tot}}\!+\!\beta\sum_{k=1}^K \! \sum_{m=1}^M \!\sum_{n=0}^N \!\left(\!\phi_{k,n}^m\!-\!(\phi_{k,n}^m)^2\right).
  \end{equation}
The optimization problem $\mathrm{P4\!\!-\!\!QP}$ can be equivalently rewritten as
\begin{subequations}\label{primalProblem}
\begin{align}
\!\!\!q^{\ast}\!\!=\!\!\min_{\Phi,\mathcal{\tilde{P},E}}\max_{\beta\geq 0} ~&\mathcal{L}(\Phi,\mathcal{\tilde{P},E},\beta)\\
\mathrm{s.t.}~~ &\eqref{P1-C6},\!\eqref{P1-C7},\!\eqref{P3-C1}\!\!-\!\!\eqref{P3-C5},\!
\eqref{two-topReliability},\!\eqref{directReliability},\!\eqref{P4-QP-C2},
\end{align}
\end{subequations}
where $q^{\ast}$ is the optimal value of the problem \eqref{primalProblem}. Meanwhile, the dual problem of \eqref{primalProblem} can be represented as
\begin{subequations}\label{dualProblem}
\begin{align}
\!\!\!d^{\ast}\!=\!\max_{\beta\geq 0}\min_{\Phi,\mathcal{\tilde{P},E}} &\mathcal{L}(\Phi,\mathcal{\tilde{P},E},\beta)\\
\mathrm{s.t.}~~ &\eqref{P1-C6},\!\eqref{P1-C7},\!\eqref{P3-C1}\!\!-\!\!\eqref{P3-C5},\!
\eqref{two-topReliability},\!\eqref{directReliability},\!\eqref{P4-QP-C2},
\end{align}
\end{subequations}
where $d^{\ast}$ is the optimal value of \eqref{dualProblem}. Based on the weak duality, we can have $d^{\ast}\leq q^{\ast}$, i.e.,
\begin{align}\label{maxmin}
  \max_{\beta\geq 0} ~\Omega(\beta)&\triangleq\max_{\beta\geq 0}\min_{\Phi,\mathcal{\tilde{P},E}} ~\mathcal{L}(\Phi,\mathcal{\tilde{P},E},\beta) \notag\\
  &\leq \min_{\Phi,\mathcal{\tilde{P},E}}\max_{\beta\geq 0} ~\mathcal{L}(\Phi,\mathcal{\tilde{P},E},\beta).
\end{align}
Note that $\mathcal{L}(\Phi,\mathcal{\tilde{P},E},\beta)$ is a monotonically increasing function in $\beta$ since $\sum_{k=1}^K \! \sum_{m=1}^M \!\sum_{n=0}^N \!\left(\!\phi_{k,n}^m\!-\!(\phi_{k,n}^m)^2\right)\geq 0$ over the interval $\phi_{k,n}^m\in[0,1],\forall k,m,n\in\mathcal{N}_0$. Thus, $\Omega(\beta)$ is increasing in $\beta$ and is bounded above by $q^{\ast}$. Denote the optimal solutions of the dual problem \eqref{dualProblem} by $\Phi^{\ast},\tilde{\mathcal{P}}^{\ast},\mathcal{E}^{\ast},\beta^{\ast}$. Next, we analyze the solution structure of the dual problem from the following two cases.

Case 1: the optimal solution $\Phi^{\ast}$ of the dual problem \eqref{dualProblem} satisfies $\sum_{k=1}^K \! \sum_{m=1}^M \!\sum_{n=0}^N \!\left(\!\phi_{k,n}^{m\ast}\!-\!(\phi_{k,n}^{m\ast})^2\right)=0$. In this case, $\tilde{\mathcal{P}}^{\ast},\mathcal{E}^{\ast},\beta^{\ast}$ are also feasible solutions of the primal problem \eqref{primalProblem}, thereby obtaining
\begin{equation}\label{case1}
  \tilde{P}_{\mathrm{tot}}(\tilde{\mathcal{P}}^{\ast})=\mathcal{L}(\Phi^{\ast},\tilde{\mathcal{P}}^{\ast},\mathcal{E}^{\ast},\beta^{\ast})=\Omega(\beta^{\ast})\geq q^{\ast}.
\end{equation}
Combing \eqref{maxmin} and \eqref{case1}, we can have the zero duality gap, i.e.,
\begin{equation}
  \max_{\beta\geq 0}\min_{\Phi,\mathcal{\tilde{P},E}} ~\mathcal{L}(\Phi,\mathcal{\tilde{P},E},\beta)=
  \min_{\Phi,\mathcal{\tilde{P},E}}\max_{\beta\geq 0} ~\mathcal{L}(\Phi,\mathcal{\tilde{P},E},\beta)
\end{equation}
must hold for $\sum_{k=1}^K \! \sum_{m=1}^M \!\sum_{n=0}^N \!\left(\!\phi_{k,n}^{m\ast}\!-\!(\phi_{k,n}^{m\ast})^2\right)=0$. Moreover, the monotonicity of $\Omega(\beta)$ w.r.t $\beta$ implies that
\begin{equation}
  \Omega(\beta)=q^{\ast},\forall \beta\geq \beta^{\ast}.
\end{equation}

Case 2: $\sum_{k=1}^K \! \sum_{m=1}^M \!\sum_{n=0}^N \!\left(\!\phi_{k,n}^{m\ast}\!-\!(\phi_{k,n}^{m\ast})^2\right)>0$ holds at the optimal solution of the dual problem \eqref{dualProblem}. In this case, $\max\limits_{\beta\geq 0} ~\Omega(\beta)$ is unbounded since $\Omega(\beta)$ is increasing in $\beta$. This contradicts the inequality in \eqref{maxmin} as the optimal value $q^{\ast}$ is finite and positive.

Therefore, $\sum_{k=1}^K \! \sum_{m=1}^M \!\sum_{n=0}^N \!\left(\!\phi_{k,n}^{m\ast}\!-\!(\phi_{k,n}^{m\ast})^2\right)=0$ holds for the optimal solutions and the proof can be completed.
\end{appendices}

{\small
	\bibliographystyle{IEEEtran}
	\bibliography{Reference}}

\begin{thebibliography}{10}
\providecommand{\url}[1]{#1}
\csname url@samestyle\endcsname
\providecommand{\newblock}{\relax}
\providecommand{\bibinfo}[2]{#2}
\providecommand{\BIBentrySTDinterwordspacing}{\spaceskip=0pt\relax}
\providecommand{\BIBentryALTinterwordstretchfactor}{4}
\providecommand{\BIBentryALTinterwordspacing}{\spaceskip=\fontdimen2\font plus
\BIBentryALTinterwordstretchfactor\fontdimen3\font minus
  \fontdimen4\font\relax}
\providecommand{\BIBforeignlanguage}[2]{{%
\expandafter\ifx\csname l@#1\endcsname\relax
\typeout{** WARNING: IEEEtran.bst: No hyphenation pattern has been}%
\typeout{** loaded for the language `#1'. Using the pattern for}%
\typeout{** the default language instead.}%
\else
\language=\csname l@#1\endcsname
\fi
#2}}
\providecommand{\BIBdecl}{\relax}
\BIBdecl

\bibitem{HampelICM2019}
G.~{Hampel}, C.~{Li}, and J.~{Li}, ``5{G} ultra-reliable low-latency
  communications in factory automation leveraging licensed and unlicensed
  bands,'' \emph{IEEE Commun. Mag.}, vol.~57, no.~5, pp. 117--123, May 2019.

\bibitem{SchulzICM2017}
P.~{Schulz}, M.~{Matthe}, H.~{Klessig}, M.~{Simsek}, G.~{Fettweis},
  J.~{Ansari}, S.~A. {Ashraf}, B.~{Almeroth}, J.~{Voigt}, I.~{Riedel},
  A.~{Puschmann}, A.~{Mitschele-Thiel}, M.~{Muller}, T.~{Elste}, and
  M.~{Windisch}, ``Latency critical {IoT} applications in 5{G}: Perspective on
  the design of radio interface and network architecture,'' \emph{IEEE Commun.
  Mag.}, vol.~55, no.~2, pp. 70--78, Feb. 2017.

\bibitem{Chai2018}
J.~{Chai}, L.~{Feng}, F.~{Zhou}, P.~{Zhao}, P.~{Yu}, and W.~{Li},
  ``Energy-efficient resource allocation based on hypergraph 3{D} matching for
  {D2D}-assisted m{MTC} networks,'' in \emph{Proc. IEEE GLOBECOM}, Dec. 2018,
  pp. 1--7.

\bibitem{SenelITC2018}
K.~{Senel} and E.~G. {Larsson}, ``Grant-free massive {MTC}-enabled massive
  {MIMO}: A compressive sensing approach,'' \emph{IEEE Trans. Commun.},
  vol.~66, no.~12, pp. 6164--6175, Dec. 2018.

\bibitem{CuiIJoSAiC2021}
Y.~Cui, S.~Li, and W.~Zhang, ``Jointly sparse signal recovery and support
  recovery via deep learning with applications in {MIMO}-based grant-free
  random access,'' \emph{IEEE J. Sel. Areas Commun.}, vol.~39, no.~3, pp.
  788--803, Mar. 2021.

\bibitem{ShafiIJSAC2017}
M.~{Shafi}, A.~F. {Molisch}, P.~J. {Smith}, T.~{Haustein}, P.~{Zhu}, P.~{De
  Silva}, F.~{Tufvesson}, A.~{Benjebbour}, and G.~{Wunder}, ``5{G}: A tutorial
  overview of standards, trials, challenges, deployment, and practice,''
  \emph{IEEE J. Sel. Areas Commun.}, vol.~35, no.~6, pp. 1201--1221, Jun. 2017.

\bibitem{ParvezICST2018}
I.~{Parvez}, A.~{Rahmati}, I.~{Guvenc}, A.~I. {Sarwat}, and H.~{Dai}, ``A
  survey on low latency towards 5{G}: {RAN}, core network and caching
  solutions,'' \emph{IEEE Commun. Surv. Tutor.}, vol.~20, no.~4, pp.
  3098--3130, Nov. 2018.

\bibitem{AijazPI2019}
A.~{Aijaz} and M.~{Sooriyabandara}, ``The {T}actile {I}nternet for industries:
  A review,'' \emph{Proc. IEEE}, vol. 107, no.~2, pp. 414--435, Feb. 2019.

\bibitem{HuangIIEM2018}
V.~K.~L. {Huang}, Z.~{Pang}, C.~A. {Chen}, and K.~F. {Tsang}, ``New trends in
  the practical deployment of industrial wireless: From noncritical to critical
  use cases,'' \emph{IEEE Ind. Electron. Mag.}, vol.~12, no.~2, pp. 50--58,
  Jun. 2018.

\bibitem{ShirvanimoghaddamICM2019}
M.~{Shirvanimoghaddam}, M.~S. {Mohammadi}, R.~{Abbas}, A.~{Minja}, C.~{Yue},
  B.~{Matuz}, G.~{Han}, Z.~{Lin}, W.~{Liu}, Y.~{Li}, S.~{Johnson}, and
  B.~{Vucetic}, ``Short block-length codes for ultra-reliable low latency
  communications,'' \emph{IEEE Commun. Mag.}, vol.~57, no.~2, pp. 130--137,
  Feb. 2019.

\bibitem{ZhangITC2020}
K.~{Zhang}, J.~{Jiao}, Z.~{Huang}, S.~{Wu}, and Q.~{Zhang}, ``Finite
  block-length analog fountain codes for ultra-reliable low latency
  communications,'' \emph{IEEE Trans. Commun.}, vol.~68, no.~3, pp. 1391--1404,
  Mar. 2020.

\bibitem{CohenIToC2021}
A.~Cohen, G.~Thiran, V.~B. Bracha, and M.~Médard, ``Adaptive causal network
  coding with feedback for multipath multi-hop communications,'' \emph{IEEE
  Trans. Commun.}, vol.~69, no.~2, pp. 766--785, Feb. 2021.

\bibitem{XuIToC2020}
Y.~Xu, D.~Cai, F.~Fang, Z.~Ding, C.~Shen, and G.~Zhu, ``Outage constrained
  power efficient design for downlink {NOMA} systems with partial {HARQ},''
  \emph{IEEE Trans. Commun.}, vol.~68, no.~8, pp. 5188--5201, Aug. 2020.

\bibitem{JiIWC2018}
H.~{Ji}, S.~{Park}, J.~{Yeo}, Y.~{Kim}, J.~{Lee}, and B.~{Shim},
  ``Ultra-reliable and low-latency communications in 5{G} downlink: Physical
  layer aspects,'' \emph{IEEE Wirel. Commun.}, vol.~25, no.~3, pp. 124--130,
  Jun. 2018.

\bibitem{Kotaba2019}
R.~{Kotaba}, C.~N. {Manchon}, N.~M.~K. {Pratas}, T.~{Balercia}, and
  P.~{Popovski}, ``Improving spectral efficiency in {URLLC} via {NOMA}-based
  retransmissions,'' in \emph{Proc. IEEE ICC}, May 2019, pp. 1--7.

\bibitem{KotabaIToWC2021}
R.~Kotaba, C.~N. Manchón, T.~Balercia, and P.~Popovski, ``How {URLLC} can
  benefit from {NOMA}-based retransmissions,'' \emph{IEEE Trans. Wireless
  Commun.}, vol.~20, no.~3, pp. 1684--1699, Mar. 2021.

\bibitem{MakkiIWCL2014}
B.~{Makki}, T.~{Svensson}, and M.~{Zorzi}, ``Finite block-length analysis of
  the incremental redundancy {HARQ},'' \emph{IEEE Wireless Commun. Lett.},
  vol.~3, no.~5, pp. 529--532, Oct. 2014.

\bibitem{AvranasIJSAC2018}
A.~{Avranas}, M.~{Kountouris}, and P.~{Ciblat}, ``Energy-latency tradeoff in
  ultra-reliable low-latency communication with retransmissions,'' \emph{IEEE
  J. Sel. Areas Commun.}, vol.~36, no.~11, pp. 2475--2485, Nov. 2018.

\bibitem{MakkiITWC2019}
B.~{Makki}, T.~{Svensson}, G.~{Caire}, and M.~{Zorzi}, ``Fast {HARQ} over
  finite blocklength codes: A technique for low-latency reliable
  communication,'' \emph{IEEE Trans. Wireless Commun.}, vol.~18, no.~1, pp.
  194--209, Jan. 2019.

\bibitem{AlamIToWC2013}
M.~S. {Alam}, J.~W. {Mark}, and X.~S. {Shen}, ``Relay selection and resource
  allocation for multi-user cooperative {OFDMA} networks,'' \emph{IEEE Trans.
  Wireless Commun.}, vol.~12, no.~5, pp. 2193--2205, May 2013.

\bibitem{JuIToC2019}
M.~{Ju} and H.~{Yang}, ``Optimum design of energy harvesting relay for two-way
  decode-and-forward relay networks under max-min and max-sum criterions,''
  \emph{IEEE Trans. Commun.}, vol.~67, no.~10, pp. 6682--6697, Oct. 2019.

\bibitem{BedeerIJoSAiC2015}
E.~{Bedeer}, A.~{Alorainy}, M.~J. {Hossain}, O.~{Amin}, and M.~{Alouini},
  ``Fairness-aware energy-efficient resource allocation for {AF} cooperative
  {OFDMA} networks,'' \emph{IEEE J. Sel. Areas Commun.}, vol.~33, no.~12, pp.
  2478--2493, Dec. 2015.

\bibitem{XuIToWC2020}
Y.~Xu, C.~Shen, T.-H. Chang, S.-C. Lin, Y.~Zhao, and G.~Zhu, ``Transmission
  energy minimization for heterogeneous low-latency {NOMA} downlink,''
  \emph{IEEE Trans. Wireless Commun.}, vol.~19, no.~2, pp. 1054--1069, Feb.
  2020.

\bibitem{XuIToVT2020}
Y.~Xu, C.~Shen, D.~Cai, and G.~Zhu, ``Latency constrained non-orthogonal
  packets scheduling with finite blocklength codes,'' \emph{IEEE Trans. Veh.
  Technol.}, vol.~69, no.~10, pp. 12\,312--12\,316, Oct. 2020.

\bibitem{PolyanskiyITIT2010}
Y.~{Polyanskiy}, H.~V. {Poor}, and S.~{Verdu}, ``Channel coding rate in the
  finite blocklength regime,'' \emph{IEEE Trans. Inf. Theory}, vol.~56, no.~5,
  pp. 2307--2359, May 2010.

\bibitem{PolyanskiyIToIT2011}
Y.~Polyanskiy, H.~V. Poor, and S.~Verdú, ``Dispersion of the
  {G}ilbert-{E}lliott channel,'' \emph{IEEE Trans. Inf. Theory}, vol.~57,
  no.~4, pp. 1829--1848, Apr. 2011.

\bibitem{YangITIT2014}
W.~{Yang}, G.~{Durisi}, T.~{Koch}, and Y.~{Polyanskiy}, ``Quasi-static
  multiple-antenna fading channels at finite blocklength,'' \emph{IEEE Trans.
  Inf. Theory}, vol.~60, no.~7, pp. 4232--4265, Jul. 2014.

\bibitem{Gursoy2011}
M.~C. {Gursoy}, ``Throughput analysis of buffer-constrained wireless systems in
  the finite blocklength regime,'' in \emph{Proc. IEEE ICC}, Jun. 2011, pp.
  1--5.

\bibitem{WuIToC2011}
P.~{Wu} and N.~{Jindal}, ``Coding versus {ARQ} in fading channels: How reliable
  should the {PHY} be?'' \emph{IEEE Trans. Commun.}, vol.~59, no.~12, pp.
  3363--3374, Dec. 2011.

\bibitem{SheITWC2018}
C.~{She}, C.~{Yang}, and T.~Q.~S. {Quek}, ``Cross-layer optimization for
  ultra-reliable and low-latency radio access networks,'' \emph{IEEE Trans.
  Wireless Commun.}, vol.~17, no.~1, pp. 127--141, Jan. 2018.

\bibitem{HuIToWC2016}
Y.~{Hu}, A.~{Schmeink}, and J.~{Gross}, ``Blocklength-limited performance of
  relaying under quasi-static {R}ayleigh channels,'' \emph{IEEE Trans. Wireless
  Commun.}, vol.~15, no.~7, pp. 4548--4558, Jul. 2016.

\bibitem{HuIToVT2016}
Y.~{Hu}, J.~{Gross}, and A.~{Schmeink}, ``On the capacity of relaying with
  finite blocklength,'' \emph{IEEE Trans. Veh. Technol.}, vol.~65, no.~3, pp.
  1790--1794, Mar. 2016.

\bibitem{HuIN2018}
Y.~{Hu}, M.~C. {Gursoy}, and A.~{Schmeink}, ``Relaying-enabled ultra-reliable
  low-latency communications in 5{G},'' \emph{IEEE Netw.}, vol.~32, no.~2, pp.
  62--68, Apr. 2018.

\bibitem{GhatakIWCL2021}
G.~Ghatak, ``Cooperative relaying for {URLLC} in {V2X} networks,'' \emph{IEEE
  Wireless Commun. Lett.}, vol.~10, no.~1, pp. 97--101, Jan. 2021.

\bibitem{Singh2020}
K.~{Singh}, M.~{Ku}, and M.~F. {Flanagan}, ``Resource allocation in
  energy-efficient {URLLC} multi-user multicarrier {AF} relay networks,'' in
  \emph{Proc.IEEE ICC}, Jun. 2020, pp. 1--6.

\bibitem{Hu2019}
Y.~{Hu}, E.~{Jorswieck}, and A.~{Schmeink}, ``Full-duplex relay in
  high-reliability low-latency networks operating with finite blocklength
  codes,'' in \emph{Proc. IEEE ISWCS}, Aug. 2019, pp. 367--372.

\bibitem{SunIToC2017}
Y.~{Sun}, D.~W.~K. {Ng}, Z.~{Ding}, and R.~{Schober}, ``Optimal joint power and
  subcarrier allocation for full-duplex multicarrier non-orthogonal multiple
  access systems,'' \emph{IEEE Trans. Commun.}, vol.~65, no.~3, pp. 1077--1091,
  Mar. 2017.

\bibitem{SunITWC2019}
C.~{Sun}, C.~{She}, C.~{Yang}, T.~Q.~S. {Quek}, Y.~{Li}, and B.~{Vucetic},
  ``Optimizing resource allocation in the short blocklength regime for
  ultra-reliable and low-latency communications,'' \emph{IEEE Trans. Wireless
  Commun.}, vol.~18, no.~1, pp. 402--415, Jan. 2019.

\bibitem{RenITC2020}
H.~{Ren}, C.~{Pan}, Y.~{Deng}, M.~{Elkashlan}, and A.~{Nallanathan}, ``Resource
  allocation for secure {URLLC} in mission-critical {IoT} scenarios,''
  \emph{IEEE Trans. Commun.}, vol.~68, no.~9, pp. 5793--5807, Sept. 2020.

\bibitem{Boyd2004}
S.~Boyd and L.~Vandenberghe, \emph{Convex Optimization}.\hskip 1em plus 0.5em
  minus 0.4em\relax Cambridge University Press, 2004.

\bibitem{Wang2019}
S.~{Wang}, T.~{Chang}, Y.~{Cui}, and J.~{Pang}, ``Clustering by orthogonal
  non-negative matrix factorization: A sequential non-convex penalty
  approach,'' in \emph{Proc. IEEE ICASSP}, May 2019, pp. 5576--5580.

\bibitem{Goh2002}
C.~J. Goh and X.~Q. Yang, \emph{Duality in optimization and variational
  inequalities}.\hskip 1em plus 0.5em minus 0.4em\relax Taylor \& Francis,
  2002.

\end{thebibliography}


%




\ifCLASSOPTIONcaptionsoff
  \newpage
\fi

\end{document}